 \newtheorem{theorem}{Theorem}[section]
 \newtheorem{proposition}{Proposition}[theorem]
  \newtheorem{remark}{Remark}[theorem]
 \newtheorem{lemma}[theorem]{Lemma}
 \theoremstyle{definition}
 \newtheorem{definition}{Definition}[section]
\newacronym{aes}{AES}{Advanced Encryption Standard}
\newacronym{aqc}{AQC}{Adiabatic Quantum Computing}
\newacronym{cnf}{CNF}{Conjunctive Normal Form}
\newacronym{crc}{CRC}{Cyclic Redundancy Check }
\newacronym{dnf}{DNF}{Disjunctive Normal Form}
\newacronym{mppk}{MPPK}{Multivariate Polynomial Public Key}
\newacronym{ecc}{ECC}{Elliptic Curve Cryptography}
\newacronym{eke}{EKE}{Encrypted Key Exchange}
\newacronym{pfp}{PFP}{Factoring Polynomial Problem}
\newacronym{grh}{GRH}{Generalized Riemann Hypothesis}
\newacronym{IND-CPA}{IND-CPA}{Indistinguishability Under Chosen-Plaintext Attack}
\newacronym{IND-CCA2}{IND-CCA2}{Indistinguishability Under Adaptive Chosen Ciphertext Attack}
\newacronym{kem}{KEM}{Key Encapsulation Mechanism}
\newacronym{lwe}{LWE}{Learning With Errors}
\newacronym{lwr}{LWR}{Learning With Rounding}
\newacronym{mpkc}{MPKC}{Multivariate Public Key Cryptosystem}
\newacronym{nist}{NIST}{National Institute of Standards and Technology}
\newacronym{ntru}{NTRU}{Nth degree Truncated polynomial Ring Units}
\newacronym{pke}{PKE}{Public-Key encryption}
\newacronym{pki}{PKI}{Public Key Infrastructure}
\newacronym{pqc}{PQC}{Post-Quantum Cryptography}
\newacronym{rsa}{RSA}{Rivest-Shamir-Adleman}
\newacronym{sidh}{SIDH}{Super singular Isogeny Diffie-Hellman} 
\newacronym{svp}{SVP}{Shortest Vector Problem}
\newacronym{tls}{TLS}{Transport Layer Security}
\title{A New Symmetric Homomorphic Functional Encryption over a Hidden Ring for Polynomial Public Key Encapsulations
}
\author{
  Randy Kuang, Maria Perepechaenko, Ryan Toth \\
  Quantropi Inc. \\
  Ottawa, Canada\\
  \texttt{\{randy.kuang, maria.perepechaenko, ryan.toth\}@quantropi.com} \\
}
\begin{document}
\maketitle

\begin{abstract}
This paper proposes a new homomorphic functional encryption using modular multiplications over a hidden ring. Unlike traditional homomorphic encryption where users can only passively perform ciphertext addition or multiplication, the homomorphic functional encryption 
retains homomorphic addition and scalar multiplication properties, but also allows for the user's inputs through polynomial variables.  
The homomorphic encryption key consists of a pair of values, one used to create the hidden ring and the other taken from this ring to form an encryption operator for modular multiplication encryption. The proposed homomorphic encryption can be applied to any polynomials over a finite field, with their coefficients considered as their privacy. 
We denote the polynomials before homomorphic encryption as plain polynomials and after homomorphic encryption as cipher polynomials. A cipher polynomial can be evaluated with variables from the finite field, GF(p), by calculating the monomials of variables modulo a prime p. 
These properties allow functional homomorphic encryption to be used for public key encryption of certain asymmetric cryptosystems, such as 
Multivariate Public Key Cryptography schemes or MPKC to hide the structure of its central map construction. We propose a new variant of MPKC with homomorphic encryption of its public key. This variant simplifies MPKC central map to two multivariate polynomials constructed from polynomial multiplications, applying homomorphic encryption to the map, and changing its decryption from employing inverse maps to a polynomial division. 
We propose to 
use a single plaintext vector and a noise vector of multiple variables to be associated with the central map, in place of the secret plaintext vector to be encrypted in MPKC. We call this variant of encrypted MPKC, a Homomorphic Polynomial Public Key algorithm or HPPK algorithm. The HPPK algorithm holds the property of indistinguishability under the chosen-plaintext attacks or IND-CPA. The overall classical complexity to crack the HPPK algorithm is exponential in the size of the prime field GF(p). We briefly report on benchmarking performance results using the SUPERCOP toolkit. Benchmarking results demonstrate that HPPK offers rather fast performance, which is comparable and in some cases outperforms the NIST PQC finalists for key generation, encryption, and decryption.  
\end{abstract}

\keywords{Homomorphic Functional Encryption \and Post-Quantum Cryptography \and Public-Key Cryptography \and PQC \and Key Encapsulation Mechanism \and KEM \and Multivariate Public Key Cryptosystem \and MPKC \and PQC Performance.}

\section{Introduction}
Homomorphic encryption was first proposed by Rivest et al. in 1978 \cite{Rivest1978-HE}, one year after filing the patent for the RSA public key cryptography~\cite{Rivest1977}. Homomorphic encryption commonly refers to privacy encryption for computation in an encrypted mode, without knowing the homomorphic key and the decryption procedure. This is noticeably different from the cryptographic algorithms used to encrypt data for secure communications or storage with public key mechanisms such as RSA \cite{Rivest1977} and Diffie-Hellman \cite{Diffie1976}, and Elliptic Curve Cryptography  \cite{Koblitz1987,Miller1986} to establish the shared key for symmetric encryption using algorithms as Advanced Encryption Standard or AES.

Homomorphic encryption can be classified into partially homomorphic and fully homomorphic. The partially homomorphic encryption supports either multiplicative or additive homomorphic operations, RSA \cite{rivest1978-RSA} and ElGamal cryptosystems \cite{Elgamal85apublic} are multiplicatively homomorphic; Goldwasser–Micali \cite{HE-goldwasser82}, Benaloh \cite{HE-benaloh-2011}, and Paillier \cite{HE-Paillier-1999} are additively homomorphic. The first milestone for fully homomorphic encryption was achieved by Gentry in 2009 using lattice-based cryptography \cite{FHE-craig-2009} to support both addition and multiplication operators in the encrypted mode. Meanwhile, Chan in 2009 proposed a symmetric homomorphic scheme based on improved Hill Cipher \cite{Chan-symmetric-Hill-2009}. Kipnis and Hibshoosh proposed their symmetric homomorphic scheme in 2012 \cite{Kipnis12efficientmethods} with a randomization function for non-deterministic encryption. Gupta and Sharma proposed their symmetric homomorphic scheme based on linear algebraic computation in 2013 \cite{HE-symmetric-Gupta-2013}. Very recently, Li et al. in 2016 proposed a new symmetric homomorphic scheme, called Li-Scheme for outsourcing databases \cite{HE-Symmetric-Li-2016}. Their scheme, at large, is associated with two finite fields: a secret small field $\mathbb{F}_q$ and a big public field $\mathbb{F}_p$, with modular exponentiation with its secret base $s$ followed by modular multiplication with plaintext message $m$. Li-Scheme supports both additive and multiplicative operations so it is a full homomorphic encryption. Wang et al. performed a cryptoanalysis of the Li-Scheme in 2018 \cite{HE-symmetric-Wang-2018} and broke the scheme with certain known plaintext-ciphertext pairs. Wang et al. further improved their cryptoanalysis in 2019 and successfully recovered the secret key with the ciphertext-only attack using lattice reduction algorithm \cite{HE-Symmetric-Qu-2019}.

Homomorphic encryption solely focuses on addition and multiplication operations on the encrypted data for plain data privacy. 
However, it would be very interesting to see an extension of homomorphic encryption from data privacy to functional privacy with variables to take the user's inputs in a framework of a public key scheme. It is rarely seen that a public key cryptosystem, more precisely quantum-safe public key cryptosystem, is purposely designed with careful considerations not only to leverage homomorphic properties of ciphertext addition and scalar multiplication but also to take user's secrets into ciphertext computation. This served as a motivation for our paper. We introduce a new Homomorphic Polynomial Public Key encapsulation or HPPK, which is an asymmetric key encapsulation scheme, with public keys encrypted using functional homomorphic encryption. HPPK uses multivariate polynomials to not only leverage homomorphic properties of addition and scalar multiplication but also allows for encrypting party's input during the ciphertext creation. That is, the public key polynomial coefficients are encrypted using homomorphic function to ensure they are never truly public and hide the structure of the public key, at the same time, treating variables of the said public key polynomials as user input allows for freedom during the encryption process.

HPPK cryptosystem has two distinct features, namely, the homomorphic encryption of the public key that allows for the user's input during ciphertext creation, and the use of a hidden ring. HPPK is not the first cryptosystem to use hidden structure. For instance, the work of Li \textit{et al.} describes a cryptosystem with a hidden prime ring~\cite{HE-Symmetric-Li-2016}. Another important example is Hidden Field Equations (HFE) cryptosystems. The examples of asymmetric multivariate encryption schemes that are based on HFE include~\cite{Kipnis99unbalancedoil,Square,ZHFE,SzepieniecAlan2016EFCA}. Various signature schemes based on HFE were also proposed~\cite{PATARINJacques2001Q1ld,PetzoldtAlbrecht2016DPfH,ZhangTan, Ding2005RainbowAN, Casanova2017GeMSSAG}. In the framework of HFE, the private polynomials as well as the structure they are defined over, a field extension, are both hidden using affine transformations.


The algorithms based on HFE, mentioned above fall in the category of quantum-safe algorithms. More precisely, multivariate quantum-safe algorithms. Quantum computing developments have been receiving a lot of focus from the academic community as well as industry leaders since Google announced its first quantum advantage in 2019 \cite{Arute2019}. But it was the National Institution of Standards and Technology (NIST) that opened the arena for quantum-resistant cryptography, when they started the post-quantum cryptography (PQC) standardization process in November 2017. Recently, they have announced third-round finalists which include four key exchange mechanism schemes (KEM) and three finalists for digital signatures \cite{NIST2021}. Four KEM finalists include code-based Classic McEliece \cite{McEliece1978}, lattice-based CRYSTALS-KYBER \cite{avanzi2017crystals}, NTRU \cite{Hoffstein1998,Bernstein2018}, and Saber \cite{vercauterensaber} algorithms. At the latest announcement, NIST selected CRYSTALS-KYBER to be standardized algorithm for KEM. In addition to the aforementioned finalists for KEM, the Multivariate Public Key Cryptosystems or MPKC is worth a special discussion. Algorithms based on multivariate polynomial problems are considered to be quantum-safe, but they also make an excellent candidate for homomorphic encryption due to the use of multivariate polynomials. 

The framework of MPKC is built on a system of quadratic polynomials. The public key is represented by a central map $\mathcal{P}: \mathbb{F}_p^m \rightarrow \mathbb{F}_p^{\ell}$ with $m$ variables and $\ell$ polynomials \cite{Ding2009mpkc}. Many variants of MPKC central map constructions have been proposed since Matsumoto and Imai first introduced this cryptosystem in 1988 \cite{mpkc1-1988}, including single field systems and mixed field systems \cite{mpkc-wolf}. Single field MPKC includes several Triangular systems and the Oil and Vinegar system since Patarin and Goubin in 1997 \cite{Patarin97trapdoorone-way} and unbalanced Oil and Vinegar scheme by Kipnis et. al. in 1999 \cite{Kipnis99unbalancedoil}. The mixed field MPKC refers to Matsumoto-Imai system \cite{mpkc1-1988} and Hidden Field Equation \cite{HFE-1996}. In addition, Wang et al. in 2006 proposed a Medium-Field MPKC scheme \cite{Wang_amediumeld} and an improved scheme in 2008~\cite{mdeium-field-mpkc-2008}. Ding and Schmidt proposed Rainbow as a MPKC digital signature scheme in 2005~\cite{Ding2005RainbowAN}. 

Attacks on MPKC cryptosystems are mainly classified into two categories:  algebraic solving attacks and linear algebra attacks.  Algebra solving attacks attempt to solve the MPKC multivariate equation system from the public key with ciphertext $(z_1, z_2, \dots, z_{\ell})$ to recover the pre-image $(x_1, x_2, \dots, x_m)$.  Faugére reported his first attack on MPKC in 1999\cite{FAU99}  and in 2002\cite{Fau02} using Gröber bases ($F_4$), later in 2003 Faugére and Antoine reported their attack on HFE Gröber bases ($F_5$). Ding et. al. proposed their new Zhuang-Zi algorithm to solve the multivariate system in 2006 \cite{Ding2006ZhuangZiAN}. In linear algebra attacks, Courtois \textit{et al.} reported their attack on MPKC using the relinearization technique, aclled XL in 2000 \cite{XL-2000}.  The Minrank attack has been successfully applied by Goubin and Courtois on the single field system in 2000 \cite{minrank-goubin-200} and by Kipnis and Patarin on the mixed field system in 1999 \cite{minrank-kipnis99}.

A new type of polynomial public key has been recently proposed by Kuang in 2021~\cite{kuang2021ACCC}, based on univariate polynomial multiplications, by Kuang and Barbeau in 2021~\cite{kuang2021CCECE, kuang2021performance, kuang2021QCE21}, based on multivariate polynomial multiplications with two noise functions to increase the public key security against possible public key attacks. The digital signature scheme of the multivariate polynomial public key or MPPK has been prosoed by Kuang, Perepechaenko and Barbeau in 2022~\cite{kuang2022.08.01-DS}. This paper explores the possibility to combine a new homomorphic encryption to key construction to further enhance the security of the MPPK cryptography for key encapsulation mechanism or KEM.

The proposed HPPK scheme can be also considered as a new variant of MPKC scheme with public keys being encrypted using homomorphic functional encryption. We begin by introducing the proposed symmetric homomorphic encryption scheme in~\nameref{sec:SHE}. The proposed HPPK algorithm is then discussed in~\nameref{sec:HPPK}. We present the reader with thorough security analysis of HPPK in~\nameref{sec:security}, and report on benchmarking the performance of HPPK in~\nameref{sec:bench}. We conclude with~\nameref{conclusion}.

\section{A New Symmetric Homomorphic Functional Encryption over a Hidden Ring}\label{sec:SHE}

In contrast to conventional homomorphic cryptography used for data privacy, in this paper we propose homomorphic functional encryption to be applied to the public key in the framework of multivariate asymmetric cryptography. This will allow for an asymmetric scheme with encrypted public keys. Moreover, by construction, functional homomorphic encryption allows for user's input during the ciphertext generation procedure. That is, the ciphertext can be created with the input of the encrypting party, however, the public key used for encryption is itself encrypted using functional homomorphic operator. The decrypting party is the only party that has knowledge of the private key associated with the functional homomorphic encryption operator as well as the asymmetric scheme private key. Essentially, the  homomorphic functional encryption defined in this paper provides a round-trip envelope for a public key encryption. 

In a way, such approach combines three main areas of cryptography, namely, asymmetric cryptography, homomorphic encryption, and symmetric cryptography with self-shared key. This phenomenon is illustrated in Fig.~\ref{scheme}. In the figure, the traditional public key derived from a given assymetric algorithm is called plain public key or PPK, the homomorphically encrypted PPK is called cipher public key or CPK. The cipher is produced by evaluating the public key polynomial values using a user-selected secret. The decryption would perform in two stages: homomorphic decryption and then secret extraction.

\begin{figure}[h]
\caption{Illustration of a cryptosystem combining asymmetric cryptography, symmetric cryptography with a single self-shared key, and homomorphic encryption. }
\includegraphics[width=6.5cm, height=5.5cm]{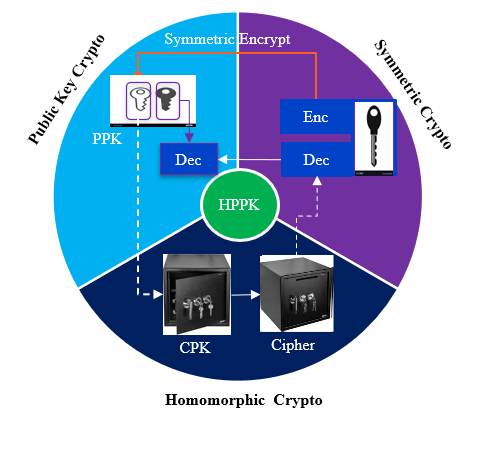}
\centering
\label{scheme}
\end{figure}

We begin by introducing the Homomorphic Functional Encryption Operator. In order to allow for the user's input during ciphertext creation, and leverage additive and scalar multiplicative homomorphic features, the functional homomorphic encryption is applied to polynomials. We discuss the reason for this further. Hence, when introducing the said operator we assume that it will be applied to polynomials.


\subsection{Homomorphic Encryption Operator}
Let $S$ be a positive integer, and $R$ be a randomly chosen value such that $R \in \mathbb{Z}_{S}$ and $gcd(R, S) =1.$ We propose a Homomorphic Functional Encryption Operator $\hat{\mathcal{E}}_{(R, S)}$, with a secret homomorphic key being a tuple $(R,S)$. The values $S$ and $R$ are never shared. 

In its general form, the encryption operator is defined as a multiplicative operation modulo a hidden value $S$ as
\begin{equation}\label{eq:he}
    \hat{\mathcal{E}}_{(R, S)}(f) = (R \circ f) \ \text{mod} \  S,
\end{equation}
where $f$ denotes 
any univariate or multivariate polynomial $f = \sum_{i = 0}^{k}f_{i}X_{i}$, over $\mathbb{F}_p$ with $X_i$ being its monomials. The encryption operator acts on the coefficients of $f$, which we refer to as plain coefficients. 
This produces what we call cipher coefficients, denoted $h_i$, as 
\begin{equation*}
    \hat{\mathcal{E}}_{(R, S)}f_i = Rf_i \ \text{mod} \  S = h_i
\end{equation*}
for every $i$. Similarly, we can define a Homomorphic Functional Decryption Operator as 
\begin{equation}\label{eq:hd}
    \hat{\mathcal{E}}_{(R^{-1}, S)}h = (R^{-1} \circ  h) \ \text{mod} \  S.
\end{equation}
Such operator decrypts the coefficients of the polynomial $h$. That is, it successfully decrypts the cipher coefficients back to the plain coefficients. More precisely, 
\begin{equation*}
\hat{\mathcal{E}}_{(R^{-1}, S)}h_i = R^{-1}(Rf_i) \ \text{mod} \  S = (R^{-1}R)f_i \ \text{mod} \  S = f_i
\end{equation*} for any $i.$

The above defined homomorphic operator $\hat{\mathcal{E}}_{(R, S)}$ holds following homomorphic properties:
\begin{itemize}
    \item [$\bullet$] $\hat{\mathcal{E}}_{(R, S)}$ is additively homomorphic: if $a$ and $b$ are two plain constants, then $\hat{\mathcal{E}}_{(R, S)} (a + b) = R(a+b)  \text{ mod }S= Ra + Rb \text{ mod }S = \hat{\mathcal{E}}_{(R, S)} a + \hat{\mathcal{E}}_{(R, S)} b$;
    \item [$\bullet$] $\hat{\mathcal{E}}_{(R, S)}$ is scalar multiplicatively homomorphic: if $a$ is a plain constant and $x$ is a variable, then $\hat{\mathcal{E}}_{(R, S)} (ax) = R(ax) = (Ra)x = [\hat{\mathcal{E}}_{(R, S)}a] x$.
\end{itemize}
Thus, the operator $\hat{\mathcal{E}}_{(R, S)}$ offers partially homomorphic encryption. We leave it to the reader to verify that the same properties hold true for the proposed homomorphic functional decryption operator $\hat{\mathcal{E}}_{(R^{-1}, S)}$.  
Note that these homomorphic properties come from linearity, and thus are natural to polynomials. Indeed, polynomials hold additive and scalar multiplicative properties through their coefficients. Moreover, polynomials can be defined and evaluated with coefficients in a field or a ring, different from a field or a ring for variables. We leverage this property, and thus, apply the functional homomorphic encryption to public key cryptosystems with polynomial public keys. 



\subsection{Homomorphically Encrypted Polynomials}
As we have previously stated, the proposed homomorphic encryption is applicable to all polynomials over a ring $\mathbb{Z}_p$ or finite field $\mathbb{F}_p$ characterized by a prime $p$. In this work, when we refer to polynomials, we imply that the plain polynomials, to be encrypted, are considered modulo $p$, unless stated otherwise. A generic multivariate polynomial has the following form
\begin{equation*}
p(x_1,\ldots,x_m)
=
\sum_{j_1=1}^{\ell_1} 
\cdots 
\sum_{j_m=1}^{\ell_m}
p_{ij_1\ldots j_m} x_1^{j_1} \cdots x_m^{j_m}.
\end{equation*}
Alternatively, let $X_j=x_1^{j_1} \cdots x_m^{j_m}$ denote the monomials of such polynomial, then

\begin{equation}\label{eq:genericP}
p(x_1,\ldots,x_m) = \sum_{j=1}^{L} p_{j} X_j,
\end{equation}
where L denotes the total number of terms. 

To successfully encrypt and decrypt 
any polynomial $p(x_1, \dots, x_m)$ using the functional homomorphic encryption and decryption operators defined in Eq.~\eqref{eq:he} and~\eqref{eq:hd} respectively, the following conditions must be met:
\begin{itemize}
    \item [$\bullet$] The monomials $X_j$ are to be computed as $X_j = (X_j \ \text{mod} \ p)$. The values of monomials reduced modulo $p$ are used to compute the value of the polynomial $p(x_1, \dots, x_m)$. 
    \item [$\bullet$]  The homomorphic secret key value $S$ should satisfy the bit length condition: $|S|_2 > 2|p|_2 + |L|_2$.
\end{itemize}
\sloppy
The first condition ensures that polynomial $p(x_1, \dots, x_m)$ is evaluated as if the mononials $X_j$ are new variables over $\mathbb{F}_p$, 
Indeed, the operator $\hat{\mathcal{E}}_{(R, S)}$ is applied to the polynomial $p(x_1, \dots, x_m)$ in the following way 
\begin{equation}
\hat{\mathcal{E}}_{(R, S)}p(x_1, \dots, x_m) = \sum_{j=1}^{L} [Rp_{j} \text{ mod }S] X_j.
\end{equation}
Such encrypted polynomial can be computed as $$\sum_{j=1}^{L} [Rp_{j} \text{ mod }S](X_j \text{ mod p}) = \bar{p}.$$ Note that the computed value was not reduced modulo any integer, nor is the arithmetic performed modulo any integer. Thus, the user's input through monomials $X_j$ remains intact and can be decrypted correctly. Let the plain value of the polynomial with user's input, that is, if the polynomial was not encrypted with $\hat{\mathcal{E}}_{(R, S)},$ be $$\hat{p} = \sum_{j=1}^{L} p_{j}X_j \text{ mod p}.$$ To ensure successful decryption, the second condition must be met. If the size of $S$ is sufficiently large, the values of coefficients and variables remains the same after decryption, and it is possible to recover $\hat{p}.$ Indeed, 
\begin{equation}
\hat{\mathcal{E}}_{(R^{-1}, S)}\bar{p}  = \sum_{j=1}^{L} [R^{-1}Rp_{j} \text{ mod }S] (X_j \text{ mod p}) = \sum_{j=1}^{L}p_{j}(X_j \text{ mod p}),
\end{equation}
and then the value $\sum_{j=1}^{L}p_{j}(X_j \text{ mod p})$ can be reduced modulo $p$ to yield $\sum_{j=1}^{L}p_{j}X_j \text{ mod p} = \hat{p}.$

To elaborate more on this, we present the reader with two examples of functional homomorphic encryption of linear and quadratic polynomials.

\subsubsection{Linear Polynomials}

Recall, that we encrypt the coefficients of the polynomials defined over $\mathbb{F}_p$, which successfully maps polynomials from $\mathbb{F}_{p}[x_1, \dots, x_m]$ to $\mathbb{Z}_{S}[x_1, \dots, x_m]$, leaving $x_1, \dots, x_m \in \mathbb{F}_p.$ A generic linear multivariate polynomial over a finite field $\mathbb{F}_p$ has form
\begin{equation}\label{eq:plainL}
    p( x_1, x_2, \dots,x_m) = \sum_{j=1}^{m} p_{j} x_j \ \text{mod} \ p.
\end{equation}
Conventionally, in the asymmetric encryption schemes, the public key inherits mathematical logic from the private key, making it vulnerable. Hence, if public key consists of polynomials, we wish to encrypt the coefficients of the said polynomials using functional homomorphic operator, to hide the mathematical logic. In this case we share the cipher public key, encrypted using functional homomorphic operator. To ensure that the ciphertext can be still created in the framework of asymmetric public key scheme, the variables in the public key polynomials are used for user's input. They are not encrypted using homomorphic encryption, but only using the encryption procedure from the asymmetric scheme. Such variable values can consist of the plaintext only, or plaintext and noise used for obscurity. 

Applying homomorphic encryption operator to the above linear polynomial, defined in Eq.\eqref{eq:plainL}, produces a cipher linear polynomial with coefficients in a hidden ring $\mathbb{Z}_S$, and variables in $\mathbb{F}_p:$
\begin{align}\label{eq:cipher1}
&\mathcal{P}( x_1, x_2, \dots,x_m) = \hat{\mathcal{E}}_{(R, S)} p( x_1, x_2, \dots,x_m) \nonumber \\
&= \sum_{j=1}^{m} (R p_{j} \ \text{mod} \ S) x_j 
=\sum_{j=1}^{m} \mathcal{P}_{j} x_j. 
\end{align}
While the plain coefficients, $p_j$, are encrypted into cipher coefficients, $\mathcal{P}_j$, the cipher polynomial $\mathcal{P}( x_1, x_2, \dots,x_m)$ can still be evaluated with a set of chosen values $r_1, \dots, r_m \in \mathbb{F}_p$ to produce value $\bar{\mathcal{P}}$ 
\begin{equation}\label{eq:cipherPL}
    \bar{\mathcal{P}} =\mathcal{P}( r_1, r_2, \dots, r_m) = \sum_{j=1}^{m} \mathcal{P}_{j} (r_j \ \text{mod} \ p).
\end{equation}
Let the value $\bar{p} = \sum_{i=1}^{m} p_{j}r_j \ \text{mod} \ p,$ be the original ciphertext of the asymmetric scheme. However, it is encrypted into the value $\bar{\mathcal{P}}$ using homomorphic encryption. To recover the plain polynomial value, that is, decrypt the cipher coefficients, into the plain coefficients and evaluate polynomial modulo $p$, we first apply the functional homomorphic decryption operator $\hat{\mathcal{E}}_{(R^{-1}, S)}$ to get $ \hat{\mathcal{E}}_{(R^{-1}, S)} \bar{\mathcal{P}},$ and then reduce this value modulo $p$. More precisely,
\begin{equation*}\label{eq:nuliffy}
 \hat{\mathcal{E}}_{(R^{-1}, S)} \bar{\mathcal{P}} = \sum_{j=1}^{m} [R^{-1}\mathcal{P}_{j} \text{ mod }S] (r_j \ \text{mod} \ p) = \sum_{i=1}^{m} p_{j}(r_j \ \text{mod} \ p), 
 \end{equation*}
which reduced modulo $p$ is $$\sum_{i=1}^{m} p_{j}r_j \ \text{mod} \ p = \bar{p}.$$ 

In a framework of asymmetric scheme with functional homomorphic encryption element, polynomials such as in Eq.~\eqref{eq:plainL} are associated with plain coefficients, that is, the original public keys. The cipher polynomials have form as in Eq.~\eqref{eq:cipher1}, with coefficients being encrypted from the plain public keys, using homomorphic encryption. Such cipher public keys are shared, and the plain public keys are stored securely and never shared. The ciphertext in this combined algorithm is of the form as in Eq.~\eqref{eq:cipherPL}. The decrypting party first needs to decrypt the ciphertext to nullify the homomorphic encryption of the public key, as shown in Eq.~\eqref{eq:cipherPL}. Afterwards, the decryption party can perform decryption procedure that corresponds to the given asymmetric scheme.  


\subsubsection{Quadratic Polynomials}
Multivariate quadratic polynomials serve as the foundation of Multivariate Public Key Cryptosystem or MPKC\cite{Ding2009mpkc, Ding2020SME,Ding2020}. Thus, we want to pay special attention on applications of functional homomorphic encryption on multivariate quadratic polynomials. 
A general quadratic multivariate polynomial $p(x_1, x_2, \dots, x_n)$ over a finite field $\mathbb{F}_p$ has the following form
\begin{equation}\label{eq:quadplain}
    p( x_1, x_2, \dots,x_m) = \sum_{1\leq i\leq j}^{m} p_{ij} x_ix_j \ \text{mod} \ p,
\end{equation}
where the coefficients $p_{ij}$ are considered as the privacy constants for this polynomial function so they must be hidden from public. This is done by applying the functional homomorphic encryption operator to this polynomial as follows
\begin{align} \label{eq:cipherPQ}
\mathcal{P}( x_1, x_2, \dots,x_m) = \hat{\mathcal{E}}_{(R, S)} p(x_1, x_2, \dots,x_m) \\
= \sum_{1\leq i\leq j}^{m} (R p_{ij} \ \text{mod} \ S) x_ix_j
=\sum_{1\leq i\leq j}^{m} \mathcal{P}_{ij} x_ix_j. \\
\end{align}
Here, the encrypted coefficients are defined over the hidden ring $\mathbb{Z}_S$, however, all the variables $x_1, \dots, x_m$ are still elements of the field $\mathbb{F}_p$. As we have previously mentioned, we refer to the coefficients $p_{ij}$ as plain coefficients, and $\mathcal{P}_{ij}$ are referred to as cipher coefficients. Similarly, $\mathcal{P}( x_1, x_2, \dots,x_m)$ and $p( x_1, x_2, \dots,x_m)$ are referred to as cipher and plain polynomials respectively. 

While coefficients are encrypted with homomorphic encryption operator, the polynomial $\mathcal{P}( x_1, x_2, \dots,x_m)$ still accepts user's input. That is, the cipher polynomial value $\bar{\mathcal{P}}$ can be still calculated with a chosen set of $r_1, \dots, r_m$ from the field $\mathbb{F}_p$ as follows
\begin{equation}\label{quadcipher}
       \bar{\mathcal{P}} = \mathcal{P}( r_1, r_2, \dots, r_m) =  \sum_{1\leq i\leq j}^{m} \mathcal{P}_{ij} (x_ix_j \ \text{mod} \ p).
\end{equation}
 Note that the computed value $\bar{\mathcal{P}}$ is an integer. The arithmetic to compute such value was not performned modulo any integer. The plain polynomial values are securely hidden through the hidden ring $\mathbb{Z}_{S}$. To recover the plain polynomial equation, decryption opertor $\hat{\mathcal{E}}_{(R^{-1}, S)}$ can be applied to the cipher polynomial value $\bar{\mathcal{P}}$, followed by reduction $\text{mod} \ p $:
\begin{align}\label{eq:plainP}
&\hat{\mathcal{E}}_{(R^{-1}, S)} \bar{\mathcal{P}} = R^{-1} \bar{\mathcal{P}}\ \text{mod} \ S, \text{ then }\\
&(R^{-1} \bar{\mathcal{P}}\ \text{mod} \ S) \ \text{mod} \ p  = p( r_1, r_2, \dots,r_m) = \bar{p}.
\end{align}
The value $\bar{p}$ is the plain polynomial value for the chosen values of variables $x_1, \dots, x_m$ by the encrypting party. 

Similar to the linear case, the public key of the asymmetric scheme consist of quadratic polynomials of the form~\eqref{eq:quadplain}, to be encrypted using homomorphic functional encryption operators. The cipher public keys are of the form~\eqref{eq:cipherPQ}. Such cipher public keys are the ones shared, while the plain public keys are not. The ciphertext in the combined scheme is of the form~\eqref{quadcipher}, which needs to be decrypted back to the plain value. For that a homomorphic decryption operator is applied, as in Eq~\eqref{eq:plainP}, and the plain ciphertext value is recovered. 



\section{Homomorphic Polynomial Public Key Cryptosystem}\label{sec:HPPK}

\subsection{Brief Summary of MPKC}
An interested reader can find the detail description of MPKC schemes by Ding and Yang~\cite{Ding2009mpkc}. In this section, we briefly outline the basic mechanism of MPKC algorithms. The framework mainly consists of $\ell$ quadratic multivariate polynomials 
\begin{equation*}
   p_1(x_1, \dots, x_m), p_2(x_1, \dots, x_m), \dots, p_{\ell}(x_1, \dots, x_m)
\end{equation*}
in $m$ variables over finite field $\mathbb{F}_p$. Each polynomial $p_k(x_1, \dots, x_m)$ can be written in its expanded form as

\begin{equation}\label{eq:MQ}
       p_k( x_1, \dots,x_m) = \sum_{i<j=1}^{m} p_{ijk} x_ix_j 
\end{equation}
for $k = 1, 2, \dots, l$. In the literature Eq.\eqref{eq:MQ} is generally written in a matrix form as
\begin{equation}
    p_k( x_1,  \dots,x_m) = (x_1,  \dots,x_m) 
    \begin{pmatrix}
p_{11k} & p_{12k} & \ldots & p_{1mk}\\ 
\ldots & \ldots & \ldots & \ldots\\ 
p_{i1k} & p_{i2k} & \ldots & p_{imk}\\
\ldots & \ldots & \ldots & \ldots\\
p_{m1k} & p_{m2k} & \ldots & p_{mmk} 
\end{pmatrix}
(x_1, \dots,x_m)^T \\
\end{equation}
briefly expressed as,
\begin{equation*}
   p_k( x_1,  \dots,x_m) = \vec{x}\cdot \mathcal{P}_k \cdot \vec{x}^T,
\end{equation*}
where $\mathcal{P}_k$ is an $m\times m$ square matrix. Considering all $\ell$ polynomials, we can write the MPKC map from $\mathbb{F}_p^m$ to $\mathbb{F}_p^{\ell}$ as a 3-dimensional matrix $\mathcal{P}[\ell][m][m]$, which is a trapdoor called the central map. The central map is selected to be easily invertible. In order to protect this central map and its structure, two affine linear invertible maps $T$ and $S$ are chosen to construct the MPKC public key:
\begin{itemize}
    \item Public Key: $\mathcal{\bar{P}} = T \circ \mathcal{P} \circ S$.
    \item Private Key: $(T, \mathcal{P}, S).$
\end{itemize}

The MPKC encryption procedure simply to evaluates $\ell$ polynomials over the field $\mathbb{F}_p$ as
\begin{equation}\label{eq:mpkc-sys}
  \vec{z}= \mathcal{\bar{P}}(\vec{x}) =\{ z_1=p_1(x_1, \dots, x_m),  z_2=p_2(x_1, \dots, x_m), \dots,  z_{\ell}=p_{\ell}(x_1, \dots, x_m)\}
\end{equation}
and decryption works as follows
\begin{equation*}
    \vec{u} = T^{-1}(\vec{z}), \vec{v} = \mathcal{P}^{-1}(\vec{u}), \vec{x} = S^{-1}(\vec{v}).
\end{equation*}

The major step to use MPKC is to construct the invertible central map $\mathcal{P}$ over a finite field $\mathbb{F}_p$ to perform a map: $\mathbb{F}_p^m \rightarrow \mathbb{F}_p^{\ell}$.

There may be a potential way to enhance the security of MPKC cryptosystem by applying the proposed homomorphic encryption on its map: $\mathbb{F}_p^m\rightarrow\mathbb{F}_p^{\ell}$. The homomorphic encryption effectively hides the public key construction logic over a hidden ring $\mathbb{Z}_S$. In this case, an encryption key $R_k$ is required for each quadratic polynomial $p_k(x_1,\dots,x_m)$, with value $R_k$ chosen over the hidden ring $\mathbb{Z}_S$ for all $k$. Hence, there are a total of $\ell$ encryption keys for MPKC. The MPKC encryption in this case is almost the same as the original MPKC encryption. The ciphertext $(z_1, z_2, \dots, z_{\ell})$ is to be homomorphically decrypted to create original multivariate equation system, as illustrated in Eq.\eqref{eq:mpkc-sys}. This means, Eq.\eqref{eq:mpkc-sys} is hidden under the hidden ring  $\mathbb{Z}_S$. On one hand, applying the homomorphic encryption would increase the public key size for MPKC, however, the number of variables can be reduced due to the homomorphic encryption. 

In this paper, we are not going to further explore this variant of MPKC schemes but we will focus on another variant of MPKC, called HPPK which we propose in the new section.

\subsection{HPPK Encapsulation}
We propose a new variant of an MPKC scheme, called the Homomorphic Polynomial Public Key or HPPK, with the following  considerations:
\begin{itemize}
    \item[$\bullet$] The vector on the left hand side of the map $\mathcal{P}$ is treated as $\vec{x}_l$ and the vector on the right hand side as $\vec{x}_r$; 
    \item[$\bullet$] The vector $\vec{x}_l$ is replaced with  $\vec{x}_l =(x^0, x^1, x^2, \dots, x^n)$, considering $\vec{x}_l$ as a message vector in a polynomial vector space represented by a basis $\{x^0, x^1, x^2, \dots, x^n\}$ for a message variable $x$ and $\vec{x}_r=(x_1, \dots, x_m)$ as a noise vector for noise variables $x_1, \dots, x_m$;
    \item[$\bullet$] The proposed homomorphic encryption is applied to the central map $\mathcal{P}$, mapping the elements from $\mathbb{F}_p \rightarrow \mathbb{Z}_S$:
    \subitem $\mathcal{\bar{P}} = \hat{\mathcal{E}}_{(R, S)}\mathcal{P}$ \\
    and the decryption is de-mapping from $\mathbb{Z}_S \rightarrow \mathbb{F}_p$:
    \subitem  $\mathcal{P} = \hat{\mathcal{E}}_{(R^{-1}, S)}\mathcal{\bar{P}} \ \text{mod} \ p$
    \item[$\bullet$] The number of polynomials is reduced to $\ell = 2$;
    \item[$\bullet$] The decryption mechanism is changed from inverting maps to modular division, which automatically cancels the noise used for obscurity.
\end{itemize}

\subsubsection{Key Construction}
Without loss of generality, we change the notation of the unencrypted central map to $P$. Under the above considerations, the central map $P$ consists of two multivariate polynomials 

\begin{equation} \label{eq:p1}
    p_1(x, x_1, x_2, \dots,x_m) = (1, x^1,  \dots,x^n) 
    \begin{pmatrix}
p_{011} & p_{021} & \ldots & p_{0m1}  \\
p_{111} & p_{121} & \ldots & p_{1m1}\\
\ldots & \ldots & \ldots & \ldots\\ 
p_{i11} & p_{i21} & \ldots & p_{im1}\\
\ldots & \ldots & \ldots & \ldots\\
p_{n11} & p_{n21} & \ldots & p_{nm1} 
\end{pmatrix}
(x_1, x_2, \dots,x_m)^T, 
\end{equation}
and 
\begin{equation}\label{eq:p2}
    p_2(x, x_1, x_2, \dots,x_m) = (1, x^1,  \dots,x^n) 
    \begin{pmatrix}
p_{012} & p_{122} & \ldots & p_{0m2}  \\
p_{112} & p_{122} & \ldots & p_{1m2}\\ 
\ldots & \ldots & \ldots & \ldots\\ 
p_{i12} & p_{i22} & \ldots & p_{im2}\\
\ldots & \ldots & \ldots & \ldots\\
p_{n12} & p_{n22} & \ldots & p_{nm2} 
\end{pmatrix}
(x_1, x_2, \dots,x_m)^T.
\end{equation}
Note that the matrix maps ${P}_1$ and ${P}_2$ are of size $(n+1)\times m$, thus, no longer square. The construction of $p_1( x, x_1, \dots,x_m)$ and $p_2( x, x_1, \dots,x_m)$ can alternatively be achieved with polynomial multiplications
\begin{align}\label{eq:pbf}
    &p_1(x, x_1, \dots,x_m) = b(x, x_1, \dots,x_m)f_1(x) \\
    &\nonumber p_2(x, x_1, \dots,x_m) = b(x, x_1, \dots,x_m)f_2(x),
\end{align}
where the base multivariate polynomial $b(x, x_1, x_2, \dots,x_m)$  and univariate polynomials $f_1(x)$ and $f_2(x)$ have the following generic forms
\begin{align}\label{eq:bff}
    &b(x, x_1, \dots,x_m) = \sum_{i=0}^{n_b} \sum_{j=1}^m b_{ij}x^ix_j  \\
    &\nonumber f_1(x) = \sum_{i=0}^{\lambda} f_{1i} x^i\\
    &\nonumber f_2(x) = \sum_{i=0}^{\lambda} f_{2i} x^i. 
\end{align}
Here, $n_b$ and $\lambda$ are orders of base multivariate polynomial and univariate polynomials with respect to message variable $x$ respectively. Without loss of generality, we assume that the univariate polynomials $f_1(x)$ and $f_2(x)$ are solvable, in other words $\lambda<5$. Using Eq.\eqref{eq:pbf} and \eqref{eq:bff}, we can express
    \begin{align}\label{eq:pbff}
    &p_1(x, x_1, \dots,x_m) = \sum_{i=0}^{n}\sum_{j=1}^{m} p_{ij1}x^ix_j = \vec{x}_l \cdot P_1 \cdot \vec{x}_r \\
    &\nonumber p_2(x, x_1, \dots,x_m) = \sum_{i=0}^{n}\sum_{j=1}^{m} p_{ij2}x^ix_j = \vec{x}_l \cdot P_2 \cdot \vec{x}_r,
    \end{align}
with $p_{ij1} = \sum_{s+t=i} b_{sj}f_{1t}$ and $p_{ij2} = \sum_{s+t=i} b_{sj}f_{2t}$ being the coefficients, and $n = n_b + \lambda$. It is apparent that the plain central map $P$ as shown in Eq.\eqref{eq:p1} and Eq.\eqref{eq:p2} expanded as Eq.\eqref{eq:pbff} inherits a lot of structure. Given that the components of the central map are private key elements, the map in its unaltered form is not secure against potential attacks such as polynomial factorization, root finding, etc. 
To secure the central map, we apply functional homomorphic encryption operator to the plain central map by acting with $\hat{\mathcal{E}}_{(R_1, S)}$ on $p_1(x, x_1, \dots, x_m)$ and $\hat{\mathcal{E}}_{(R_2, S)}$ on $p_2(x, x_1, \dots, x_m)$. To be more precise, the cipher central map consists of two polynomials 
\begin{align}\label{eq:encpbff}
    &\mathcal{P}_1(x, x_1, \dots,x_m) = \sum_{i=0}^{n}\sum_{j=1}^{m} (R_1p_{ij1} \text{ mod }S)x^ix_j = \vec{x}_l \cdot \mathcal{P}_1 \cdot \vec{x}_r \\
    &\nonumber \mathcal{P}_2(x, x_1, \dots,x_m) = \sum_{i=0}^{n}\sum_{j=1}^{m} (R_2p_{ij2} \text{ mod }S)x^ix_j = \vec{x}_l \cdot \mathcal{P}_2 \cdot \vec{x}_r.
    \end{align} 
    We set public key to be the cipher central map $\mathcal{P}$, while private key consists of the homomorphic operators, the hidden ring, together with univariate polynomials: 
\begin{itemize}
    \item[$\bullet$] Security parameter: the prime finite field $\mathbb{F}_p$ which is agreed on before the key generation procedure.
    \item[$\bullet$] Private Key: 
        \subitem $\circ$ hidden ring $\mathbb{Z}_S$ with a randomly selected $S$ for the required bit length;
        \subitem $\circ$ homomorphic encryption key values $R_1$ and $R_2$ chosen from $\mathbb{Z}_S$;
        \subitem $\circ$ univariate polynomials $f_1(x)$ and $f_2(x)$ with coefficients randomly selected from $\mathbb{F}_S$;
    \item[$\bullet$] Public Key: the map $\mathcal{P}$, consisting of
        \subitem $\circ \mathcal{P}_1(\vec{x}_l, \vec{x}_r) = \vec{x}_l \cdot \mathcal{P}_1 \cdot \vec{x}_r$  
        \subitem $\circ \mathcal{P}_2(\vec{x}_l, \vec{x}_r) = \vec{x}_l \cdot \mathcal{P}_2 \cdot \vec{x}_r$ 
\end{itemize}

\subsubsection{Encryption}
Encryption is straightforward by determining the value for the secret $x$ and randomly choosing values for the noise variables $x_1,\dots, x_m$ over the field $\mathbb{F}_p$ and evaluating ciphertext integer values $\bar{\mathcal{P}}_1$ and $\bar{\mathcal{P}}_2$. That is, the ciphertext consists of two integer values $\mathcal{C} = (\bar{\mathcal{P}}_1, \bar{\mathcal{P}}_2),$ where 
\begin{align}\label{eq:cipher}
    &\bar{\mathcal{P}}_1 = \sum_{i=0}^{n}\sum_{j=1}^{m} \mathcal{P}_{ij1} (x_j x^i \ \text{mod} \ p)\\
    &\nonumber\bar{\mathcal{P}}_2 = \sum_{i=0}^{n}\sum_{j=1}^{m} \mathcal{P}_{ij2} (x_j x^i \ \text{mod} \ p). 
\end{align}
Here, $\mathcal{P}_{ij1}$ and $\mathcal{P}_{ij2}$ denote the cipher coefficients encrypted with the homomorphic encryption operators. Note that the cipher polynomials have coefficients in the hidden ring $\mathbb{Z}_S$, and all monomial calculations are performed $\text{mod} \ p$, the rest of the arithmetic is performed over integers. The values $\bar{\mathcal{P}}_1$, and $ \bar{\mathcal{P}}_2$ are integers forming the ciphertext $C=\{\mathcal{P}_1, \mathcal{P}_2\}$.  

\subsubsection{Decryption}
It is easy to verify that 
 the HPPK map as in Eq.\eqref{eq:p1} and Eq.\eqref{eq:p2},  under construction as shown in Eq.\eqref{eq:pbf}, holds a division invariant property on the multiplicand or the base multivariate polynomial $b(x, x_1, x_2, \dots,x_m)$. Indeed, 
 $$\frac{p_1(x, x_1, \dots,x_m)}{p_2(x, x_1, \dots,x_m)} = \frac{b(x, x_1, \dots,x_m)f_1(x)}{b(x, x_1, \dots,x_m)f_2(x)} = \frac{f_1(x)}{f_2(x)}.$$
The first step in the decryption process is to apply the functional homomorphic decryption operator to the ciphertext to recover plain polynomial values $\bar{p}_1$ and $\bar{p}_2$, which are evaluation results of plain multivariate polynomials $p_1(x, x_1, \dots,x_m)$ and $p_2(x, x_1, \dots,x_m)$ at the chosen message and noises respectively. This can be done as
 $$\hat{\mathcal{E}}_{(R_1^{-1}, S)}\mathcal{\bar{P}}_1 = p_1(x, x_1, \dots,x_m) = \bar{p}_1$$ $$\hat{\mathcal{E}}_{(R_2^{-1}, S)}{\mathcal{\bar{P}}}_2 = p_2(x, x_1, \dots,x_m) = \bar{p}_2.$$ 
 These values are used to compute the ratio $K$ modulo $p$ of the form 
\begin{equation}\label{eq:K}
   K = \frac{\bar{p}_1}{\bar{p}_2}
   =\frac{p_1(x, x_1, \dots,x_m)}{p_2(x, x_1, \dots,x_m)} 
   = \frac{f_1(x)}{f_2(x)} \ \text{mod} \ p
\end{equation}
  Note that the noise vector $\vec{x}_r$ is automatically eliminated through the division. The secret $x$  can then be found from Eq.\eqref{eq:K} by radicals if $f_1(x)$ and $f_2(x)$ are solvable such as linear or quadratic polynomials. 
  
  Note that when $\lambda > 1$, an extra 8-bit flag $\sigma$ should be added to the plaintext to distinguish the correct plaintext during the decryption procedure. We propose a formatted plaintext $X=(\sigma|x)$, with $\sigma$ to be a one byte flag. That is, $\sigma$ is concatenated with $x$, such that the most significant $8$ bits of $X$ are set as $\sigma$ and the remaining bits as $x$. The flag $\sigma$ can be generated by a cyclic redundancy check or CRC with the secret $x$. There are different CRC algorithms such as CRC-8, CRC-32, etc. 8 bits of CRC codes should be sufficient to make a right decision from the roots obtained during decryption. After decryption with successful flag verification, the shared secret would be established by removing the most significant $8$ bits of the obtained value $X$. Note that the field size should account for the flag $\sigma$. In this work, however, we focus mainly on the case $\lambda =1$. 

This division invariant property is the foundation for the HPPK encapsulation to be indistinguishable under chosen plaintext attacks.

\subsection{A Toy Example}
We demonstrate how HPPK works with a toy example. 
\subsubsection{Key Pair Generation}
Considering a prime field $\mathbb{F}_{13}$ with the prime $p=13$ and two noise variables $x_1, x_2$ for the simplicity of the demonstration purpose only, we can choose the hidden ring characterized by an integer of length $> 12$ bits. The private key consists of the following values:
\begin{itemize}
    \item [$\bullet$]$S = 6798, R_1 = 4267, R_2 = 6475$
    \item [$\bullet$] $f_1(x) = 4 + 9 x$
    \item [$\bullet$] $f_2(x) = 10 + 7 x$
    \item [$\bullet$] $B(x, x_1, x_2) = (8 + 7 x)x_1 + (5 + 11 x)x_2$ (note: just for key pair construction procedure; this polynomial is not stored in the memory)
\end{itemize}

The plain public key or PPK is simply constructed as
\begin{itemize}
    \item [$\bullet$] $P_1(x, x_1, x_2) = f_1(x)B(x, x_1, x_2) \ \text{mod} \ 13 = x_1(6+9x+11x^2) + x_2(7+11x+8x^2),$
    \item [$\bullet$] $P_2(x, x_1, x_2) = f_2(x)B(x, x_1, x_2) \ \text{mod} \ 13 = x_1(2+9x+10x^2) + x_2(11+2x+12x^2).$
\end{itemize}
The PPK polynomials are encrypted with the self-shared key $R_1, R_2$ over the ring $\mathbb{Z}_S$
\begin{itemize}
    \item [$\bullet$] $\mathcal{P}_1(x, x_1, x_2) = \mathcal{E}_{(R_1^{-1}, S)}P_1(x, x_1, x_2) = x_1(5208+4413x+6149x^2) + x_2(2677+6149x+146x^2)$
    \item [] $ \Longrightarrow \mathcal{P}_1 = 
        \begin{pmatrix}
            5208 & 2677 \\
            4413 & 6149 \\
            6149 & 146
        \end{pmatrix} $
    \item [$\bullet$] $\mathcal{P}_2(x, x_1, x_2) = \mathcal{E}_{(R_1^{-1}, S)}P_2(x, x_1, x_2) = x_1(6152+3891x+3568x^2) + x_2(3245+6152x+2922x^2)$
    \item [] $\Longrightarrow \mathcal{P}_2 =
        \begin{pmatrix}
            6152 & 3245 \\
            3891 & 6152 \\
            3568 & 2922
        \end{pmatrix} $
\end{itemize}
to create the so-called CPK $\mathcal{P}_1$ and $\mathcal{P}_2$.
\subsubsection{Encryption}
We randomly choose variables from $\mathbb{F}_{13}$: $x=8, x_1=3, x_2 = 6.$ We, then, pre-calculate values $$x_{11} = xx_1 \ \text{mod} \ 13 = 8 \times 3 \ \text{mod} \ 13 = 11,$$ $$x_{12} = x^2x_1 \ \text{mod} \ 13 = 8^2 \times 3 \ \text{mod} \ 13 = 10,$$ $$x_{21} = xx_2 \ \text{mod} \ 13 =8 \times 6 \ \text{mod} \ 13 = 9,$$ $$x_{22} = x^2x_2 \ \text{mod} \ 13 = 8^2 \times 6 \ \text{mod} \ 13 = 7.$$ Now we can calculate the ciphertext $C = \{198082, 192229\} $ as follows
\begin{itemize}
    \item [$\bullet$] $\bar{\mathcal{P}}_1 = x_1(5208+4413x+6149x^2) + x_2(2677+6149x+146x^2) = 198082$
    \item [$\bullet$] $\bar{\mathcal{P}}_2 = x_1(6152+3891x+3568x^2) + x_2(3245+6152x+2922x^2) = 192229$
\end{itemize}
\subsubsection{Decryption}
We first perform the homomorphic decryption to rebuild the plain polynomial equations
\begin{itemize}
    \item [$\bullet$] $P_1(x, x_1, x_2) =  f_1(x)B(x, x_1, x_2) = [\mathcal{E}_{(R_1^{-1}, S)} 19808] \ \text{mod}  \ 13 = [\frac{19808}{4267} \ \text{mod}  \ 6798] \ \text{mod}  \ 13 = 8$
    \item [$\bullet$] $P_2(x, x_1, x_2) =  f_2(x)B(x, x_1, x_2)=[\mathcal{E}_{(R_2^{-1}, S)} 192229] \ \text{mod} \ 13 = [\frac{192229}{6475} \ \text{mod} \ 6798] \ \text{mod} \ 13 = 9$
\end{itemize}
then we can eliminate the noise introduced by  the base multivariate polynomial
\begin{equation*}
     \frac{P_1(x, x_1, x_2)}{P_2(x, x_1, x_2)} = \frac{4 + 9 x}{10 + 7 x} = \frac{8}{9} \ mod \ 13 = 11
\end{equation*}
where the secret $x$ can be easily extracted as $x=8$. The encryption can be done with any possible values for $x_1$ and $x_2$ at a given secret $x$, which would produce different ciphertext $C$, but the decryption would reveal the same secret. This simple toy example demonstrates its capability of randomized encryption. 

\section{HPPK Security Analysis}\label{sec:security}

In this section, we analyze the security of the proposed HPPK algorithm. The security of HPPK relies on computational hardness of Modular Diophantine Equation, introduced in Definition~\ref{MDEPDef}, and Hilbert's tenth Problem, introduced in Definition~\ref{Hilbert10}. 
We begin by proving that HPPK satisfies the IND-CPA indistinguishability property. These results are then extended to prove that task of recovering plaintext from ciphertext in the framework of HPPK is NP-complete, and state its classical and quantum complexity. Afterwards, we focus on the private key attack and prove that the problem of obtaining the private key from the public key is NP-complete. Here we also provide classical and quantum complexities of obtaining privates key from public key. 
 
\subsection{Plaintext attack}

An attentive reader will notice that the evaluated ciphertext as illustrated in Eq.~\eqref{eq:cipher} has not been reduced modulo any integer. Thus, an adversary looking to perpetrate an attack to recover the plaintext can treat the coefficients of the polynomials in Eq.~\eqref{eq:cipher} and evaluated ciphertext as integers. The plaintext values, sought after by the adversary, are elements of the field $\mathbb{F}_p$, thus the malicious party can reduce the public values of the ciphertext modulo $p$ to solve for plaintext variables in the Eq~\eqref{eq:cipher}. We formally phrase it in the following remark. 

\begin{remark}\label{Modp}
For the purpose of obtaining the plaintext, the ciphertext and cipher coefficients as illustrated in the Eq.~\eqref{eq:cipher} can be considered modulo $p$ as follows
\begin{equation}\label{ciphermodp} \mathcal{C} = 
 \begin{cases}
    \sum_{i=0}^{n}\sum_{j=1}^{m} \mathcal{P}_{ij1} x_j x^i - \bar{\mathcal{P}}_1 = 0 \ \text{(mod} \ p)\\
    \sum_{i=0}^{n}\sum_{j=1}^{m} \mathcal{P}_{ij2} x_j x^i - \bar{\mathcal{P}}_2 = 0\ \text{(mod} \ p) \\
    \end{cases}  
\end{equation}
\end{remark}

\begin{definition}[Modular Diophantine Equation]\label{MDEPDef}
The Modular Diophantine Equation asks whether an integer solution exists to the equation 
$$
P(y_1,\ldots,y_k) - 1 = 0 \mod p,
$$
given as an input of a polynomial $P(y_1, \dots, y_k)$ and a prime $p.$
\end{definition}
\begin{remark}
A positive answer to this question would include a solution. 
\end{remark}

Let $m+1>2$. Note that the system in the Eq.~\eqref{ciphermodp} can be normalized as 
\begin{equation}\label{ciphermodpnorm}\mathcal{C} = 
\begin{cases}
 \sum_{i=0}^{n}\sum_{j=1}^{m} \mathcal{P}'_{ij1} x_j x^i - 1 =0 \ \text{(mod} \ p), \\ 
 \sum_{i=0}^{n}\sum_{j=1}^{m} \mathcal{P}'_{ij2} x_j x^i - 1 =0 \ \text{(mod} \ p).
 \end{cases}
\end{equation} The most naive way of solving such normalized system is to solve each equation and find a common solution. Each such equation is an instance of a Modular Diophantine Equation. The more obvious way to solve the system in Eq.~\eqref{ciphermodp} would be to use Gaussian elimination and transform the system to a single equation. Indeed, since the coefficients of the ciphertext are publicly known, and the noise variables are linear in the ciphertext, the adversary can express any noise variable using the remaining terms of the equation and reduce the system to a single equation of the form
\begin{equation}\label{singlecipher}
H(x, x_1, \dots, x_{m-1})-1 = 0    
\end{equation}over $\mathbb{F}_p$ with $m$ unknowns, where $m>1$. We assume that the adversary favours the ciphertext form with less variables. Thus, from the perspective of the adversary the cipheretext has form as in Eq.~\eqref{singlecipher}. From here on forward, we consider the ciphertext in the form given in Eq.~\eqref{singlecipher}. We formally define said form below.

\begin{definition}
Let $m+1>2$. The ciphertext in its normalized reduced form is a single equation 
\begin{equation}\label{reducednormalform}
H(x, x_1, \dots, x_{m-1})-1 = 0     
\end{equation} over $\mathbb{F}_p$, where $x$ corresponds to the plaintext variable and the remaining variables are noise variables. 
\end{definition}

Note that even in its normalized reduced form the ciphertext is an instance of a Modular Diophantine Equation. Since $m+1>2$ we can argue that the adversary does not benefit much by reducing the system in Eq.~\eqref{ciphermodpnorm} to a single equation~\eqref{reducednormalform}, and eliminating one variable. The number of expected solutions to the Eq.~\eqref{reducednormalform} remains $p^{m-1}$, and the adversary is facing with the problem of deciding which solution is the correct one. That is, a brute-force search algorithm can find a list of solutions to the Eq.~\eqref{reducednormalform} by trying all the possible $m-1$ variables values over $\mathbb{F}_p$. The adversary is interested in a particular solution from the list. 

One might argue that the attacker is interested only in the plaintext variable $x \in \mathbb{F}_p$. Thus, the adversary can simply guess the value $x$. The complexity of this guess is $\mathcal{O}(p)$. However, note that the guess has to be tested for correctness. This will require coming up with noise variables and testing whether the guess is correct. Moreover, NIST requires the size of the actual communicated secret to be $32$ bytes. Thus, the secret that is transferred between two parties consists of $K$ blocks, where each block is $p$ bits. Each block corresponds to the HPPK secret $x$. The secret message is then $K$ different values $x$ concatenated together to form a $32$ byte secret. Each such block $x$ is encrypted separately using HPPK. The complexity of correctly guessing the transferred secret message is then $O(p^4)$.

\begin{theorem}\label{theo:MDEPisNPcomplete}
The Modular Diophantine Equation Problem is NP-complete.
\end{theorem}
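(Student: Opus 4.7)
The proof naturally splits into two parts: membership in NP and NP-hardness. For membership, a yes-instance admits the certificate $(y_1,\ldots,y_k) \in \{0,1,\ldots,p-1\}^k$, whose total bit-length is polynomial in the input size. A deterministic verifier substitutes these values into the sparse monomial representation of $P$, evaluates every monomial modulo $p$ by fast exponentiation, sums the results in $\mathbb{F}_p$, and accepts iff the total equals $1$; this runs in polynomial time, so MDEP $\in$ NP.

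For NP-hardness, I would reduce from \textsc{Modular Subset-Sum}: given positive integers $a_1,\ldots,a_n$, a target $s$, and a prime $p$ all encoded in binary, decide whether some $I \subseteq \{1,\ldots,n\}$ satisfies $\sum_{i \in I} a_i \equiv s \pmod{p}$. This problem is well known to be NP-complete. The reduction exploits Fermat's little theorem: for any $y \in \mathbb{F}_p$, one has $y^{p-1} = 1$ if $y \neq 0$ and $y^{p-1} = 0$ if $y = 0$. Hence, setting
\begin{equation*}
P(y_1,\ldots,y_n) \;=\; \Bigl(\sum_{i=1}^{n} a_i\, y_i^{p-1}\Bigr) - s + 1,
\end{equation*}
one obtains an MDEP instance in which $P(y) - 1 \equiv 0 \pmod{p}$ holds exactly when the set $\{i : y_i \neq 0\}$ is a valid subset-sum witness. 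Since $P$ consists of only $n$ monomials encoded succinctly via their exponent vectors, the construction is polynomial time.

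The main obstacle would arise only if one insisted on reducing from 3-SAT instead. Encoding each clause $C_j$ as a cubic polynomial $Q_j$ with $Q_j = 0$ iff $C_j$ is satisfied under a $\{0,1\}$-assignment is routine, but fusing all of the $Q_j$'s together with Boolean constraints $y_i(y_i - 1) = 0$ into a single equation modulo $p$ is delicate, because over $\mathbb{F}_p$ sums of nonzero elements can vanish and the sum-of-squares trick available over $\mathbb{R}$ no longer applies. A workable remedy is to use Fermat-based indicator products such as $\prod_j \bigl(1 - Q_j^{p-1}\bigr)\cdot \prod_i \bigl(1 - (y_i^2 - y_i)^{p-1}\bigr) = 1$, after which one must verify that the resulting polynomial remains representable in polynomial size under the assumed encoding. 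The \textsc{Modular Subset-Sum} route avoids this technical hazard entirely, and combined with NP membership it yields the theorem.
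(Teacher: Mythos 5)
Your proof is correct in substance but takes a genuinely different route from the paper: the paper does not argue the theorem at all, it simply cites Moore and Mertens (Section 5.4.4), whose proof reduces from Boolean Satisfiability --- essentially the 3-SAT route you sketch and then deliberately set aside. Your argument instead establishes NP-membership directly (a witness can be taken with each $y_i \in \{0,\dots,p-1\}$, of polynomial bit-length, and verified by sparse evaluation with fast exponentiation modulo $p$) and obtains hardness from subset-sum modulo $p$ via the Fermat indicator $y^{p-1}$, which collapses each variable to a $\{0,1\}$ selector and so avoids the clause-by-clause product constructions a SAT reduction would need; this is cleaner and more self-contained than a bare citation. The one point you should shore up is the assertion that subset-sum modulo a given \emph{prime} is ``well known'' to be NP-complete: the standard hard version is subset-sum over the integers (or modulo an arbitrary modulus), and transferring hardness to a prime modulus requires supplying, inside the reduction, a prime $p > \sum_i a_i$ --- such a prime of polynomial bit-length exists by Bertrand's postulate, but generating one in deterministic polynomial time is technically not known, so you should either cite a source for the prime-modulus variant, allow a randomized choice of $p$, or reduce from a problem in which the prime already appears in the instance. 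With that reference or remark added, your proof stands, and it is arguably more informative than the paper's, which delegates the entire argument to the literature.
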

 \begin{proof}
The proof, using the Boolean Satisfiability Problem, is given by Moore and Meterns~\cite[Section 5.4.4]{moore2011nature}. 
 \end{proof}
 
Theorem~\ref{theo:MDEPisNPcomplete} states that a brute-force search algorithm can find a solution to the Modular Diophantine Equation by trying all the possible solutions. Thus, without loss of generality, we treat the ciphertext-only attack on a ciphertext in its normal reduced form as a Modular Diophantine Problem. Indeed, by Theorem~\ref{theo:MDEPisNPcomplete} the algorithm to find a solution to a Modular Diophantine Equation does not simply terminate to give a solution, it is a brute-force search algorithm that considers every possible solution before producing a result. In other words, it goes through all the possibilities to choose the correct one.

\subsubsection{IND-CPA Indistinguishability Property and Ciphertext only Attack}
\label{sec:MPPKhardness}

We suppose that the adversary will choose to perpetrate the attack on the ciphertext in its normal reduced form as in Eq.~\eqref{reducednormalform}, for its easier to attack. In the framework of HPPK, the public key elements are the coefficients of the ciphertext polynomials. Thus, if the ciphertext is presented in its reduced normalized form, as defined in Eq.~\eqref{reducednormalform}, setting the coefficients of such polynomial to be the ciphertext does not disadvantage the adversary. 

 \begin{theorem}[HPPK has IND-CPA property]
Let $m>1$, where $m$ is the total number of variables in the normalized reduced form of the ciphertext as in Eq.~\eqref{reducednormalform}. If the Modular Diophantine Equation is NP-complete,
the HPPK encryption system is provably secure in the IND-CPA security model with a reduction loss of $p^{m-2}$.
 \end{theorem}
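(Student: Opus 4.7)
The plan is to reduce any IND-CPA adversary against HPPK to a solver for the Modular Diophantine Equation (MDE), and then invoke Theorem~\ref{theo:MDEPisNPcomplete} to rule out efficient distinguishers. First I would formalize the IND-CPA experiment for HPPK: the challenger runs key generation and publishes the cipher public key $\mathcal{P}$; the adversary $\mathcal{A}$ outputs two challenge plaintexts $x_0, x_1 \in \mathbb{F}_p$; the challenger samples $b \leftarrow \{0,1\}$ along with fresh noise $(x_1, \dots, x_m)$ and returns $\mathcal{C}^\ast = (\bar{\mathcal{P}}_1, \bar{\mathcal{P}}_2)$ as in Eq.~\eqref{eq:cipher}; and $\mathcal{A}$ outputs a guess $b'$. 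Using Remark~\ref{Modp} and the Gaussian elimination step leading to Eq.~\eqref{reducednormalform}, I would argue without loss of generality that the adversary is working with a single equation $H(x, x_1, \dots, x_{m-1}) - 1 \equiv 0 \pmod p$, which is precisely an instance of MDE in $m$ unknowns over $\mathbb{F}_p$.

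Next I would construct a reduction $\mathcal{B}$ which, given an arbitrary MDE instance $P(y_1,\dots,y_m) - 1 \equiv 0 \pmod p$, prepares a simulated HPPK public key whose reduced-normalized ciphertext is exactly $P - 1 \equiv 0 \pmod p$, and feeds $\mathcal{A}$ the associated challenge. Any non-negligible IND-CPA advantage $\epsilon$ of $\mathcal{A}$ would let $\mathcal{B}$ single out the fibre of solutions whose plaintext coordinate equals $x_b$ and hence output a valid MDE solution with noticeable probability. The claim of reduction loss $p^{m-2}$ would come from a counting argument on the solution set of Eq.~\eqref{reducednormalform}: one equation in $m$ variables over $\mathbb{F}_p$ generically admits $\sim p^{m-1}$ solutions, and fixing the plaintext coordinate $x$ to any particular value leaves $\sim p^{m-2}$ noise completions consistent with the observed ciphertext. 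Thus correctly deciding between $x_0$ and $x_1$ is $p^{m-2}$ times stronger than merely producing \emph{some} MDE solution, and this is exactly the multiplicative gap between $\mathcal{A}$'s advantage and $\mathcal{B}$'s success probability.

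Putting the two steps together, if $\mathcal{A}$ achieves IND-CPA advantage $\epsilon$, then $\mathcal{B}$ solves MDE with success at least $\epsilon/p^{m-2}$; by Theorem~\ref{theo:MDEPisNPcomplete}, no polynomial-time $\mathcal{B}$ can do better than exhaustive search, so $\epsilon$ must be bounded by $p^{-(m-2)}$ up to negligible terms, which is the desired IND-CPA conclusion. I would close by noting that the argument requires $m > 1$ so that the fibre size $p^{m-2}$ is at least $1$, matching the hypothesis of the theorem.

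The main obstacle I anticipate is distributional soundness of the simulation. Real HPPK cipher coefficients are obtained by multiplying $\mathbb{F}_p$-valued plain coefficients by $R_1, R_2$ modulo the hidden ring $\mathbb{Z}_S$, and only then reducing modulo $p$; I must argue that the induced marginal distribution of coefficients in $\mathbb{F}_p$ is statistically close to uniform, so that embedding a generic MDE polynomial $P$ does not introduce a bias $\mathcal{A}$ could exploit to detect the simulation. A secondary subtlety is ensuring that the $p^{m-2}$ fibre count is achieved with overwhelming probability: degenerate choices of public key for which Eq.~\eqref{reducednormalform} has an atypical number of solutions must be shown to occur with negligible frequency. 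Once both points are settled, the reduction and the announced loss factor follow, and the theorem is a direct consequence of the NP-completeness of MDE.
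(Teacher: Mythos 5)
Your plan follows essentially the same route as the paper's proof: embed the given Modular Diophantine Equation instance as the (reduced, normalized) public key, answer the challenge by evaluating it at the chosen message $m_b$ and fresh random noise (with the normalization factor absorbed into the noise variables), count the $p^{m-2}$-sized fibre of solutions consistent with each candidate plaintext to obtain the stated reduction loss, and conclude by invoking Theorem~\ref{theo:MDEPisNPcomplete}. The distributional-soundness and fibre-count caveats you flag are genuine, but the paper's own argument handles them only by asserting that the simulated coefficients and noise look random and that the challenge equation has the generic $p^{m-1}$ solutions, so your sketch matches the published proof while being more explicit about those same assumptions.
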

 \begin{proof}
Assume that there exists an adversary $\mathcal{A}$ that $(t,\epsilon)$-breaks the HPPK encryption system in the IND-CPA security model. 
We construct a simulator $\mathcal{B}$ that solves the Modular Diophantine Equation.
 Given as input, a Modular Diophantine Equation instance  
$
   (p, H(x, x_1, \dots, x_{m-1}))
$, where $H(x, x_1, \dots, x_{m-1})$ is of the form~\eqref{reducednormalform} and $m>1$, the simulator $\mathcal{B}$ runs $\mathcal{A}$ as follows.
\sloppy
The simulator sets the normalized public key over $\mathbb{F}_p$ to the coefficients of the polynomial $H(x, x_1, \dots, x_{m-1})$. 
 \sloppy
 The challenge consists of the following game. 
 The adversary $\mathcal{A}$ generates two distinct messages $m_0$ and $m_1 \in \mathbb{F}_p$, and submits them to the simulator.
 The simulator $\mathcal{B}$ randomly chooses $b$ in $\{ 0, 1 \}$ as well as random values $r_1, \dots, r_{m-1}$ for the noise variables, and sets the ciphertext to be the value $$\bar{H} = H(m_b, r_1, \dots, r_{m-1}).$$ The challenge for the adversary then consists of the following equation to be solved for $x$:
 \sloppy
 $$\hat{H}(x, x_1, \dots, x_{m-1})-1 = 0$$ over $\mathbb{F}_p$. Here, $$\hat{H}(x, x_1, \dots, x_{m-1}) = \frac{1}{\bar{H}}H(x, x_1, \dots, x_{m-1}).$$ The challenge remains to be the Modular Diophantine Equation $H(x, x_1, \dots, x_{m-1})-1 =0$, since the value $\frac{1}{\bar{H}}$ can be pushed to the noise variables, which are random and do not influence the plaintext. Indeed, let $h_{ij}$ be the coefficients of the polynomial $H(x, x_1, \dots, x_{m-1})$ for any $i \in \{0, \dots, n\}$ and $j \in \{1, \dots, {m-1}\}$, then $$\sum_{i=0}^{l}\sum_{j=1}^{m-1}h_{ij}x^ix_j\times \frac{1}{H(m_b, r_1, \dots, r_{m-1})} = \sum_{i=0}^{l}\sum_{j=1}^{m-1}h_{ij}x^ix'_j,$$ where $x'_j = \frac{x_j}{\bar{H}}.$ The challenge in this case is correct, as it corresponds to the challenged plaintext and remains in the form of a Diophantine equation chosen by the simulator. 
 
 	
The coefficients of the challenge equation come from the submitted Diophantine equation, and thus, from the point of view of the adversary are random. The values $r_1, \dots, r_{m-1}$ are selected at random. The adversary does not have knowledge of the values $\{x'_1, \dots, x'_{m-1}\}$ and they can not be calculated from the other parameters given to the adversary. So the noise variables $x_j'$ for all $j\in\{1, \dots, m-1\}$ are random. 
Hence, the simulation holds randomness property. By construction, the simulation is indistinguishable from a real attack. 
That is, the adversary is challenged with solving the equation as in the Eq.~\eqref{reducednormalform}, which is HPPK ciphertext in its normalized reduced form.  

There is no abort in the simulation. The adversary outputs a random guess $b'$ of $b$. When $b'$ is equal to $b$, the adversary wins. Otherwise, the adversary looses. The probability of simply guessing the value for $x$ is $Pr = \frac{1}{2}$. We will calculate the probability of solving the IND-CPA challenge with the advantage of the adversary, that is $Pr = \frac{1}{2} + \alpha$. The advantage comes from the assumption that the adversary can break the HPPK cryptosystem.
 	
The challenge has a general form as in the Eq.~\eqref{reducednormalform}, thus, the equation is expected to have $p^{m-1}$ distinct solutions, considering all $m$ variables. On the other hand, it is known that the variable $x \in \{m_0, m_1\}$.  Assuming $x = m_0$, there are now $p^{m-2}$ possible solutions to choose the correct solution from. The same is true for $x=m_1$. That is, the probability of finding correct solution of the equation $H(x, x_1, \dots, x_{m-1}) - 1 =0$ is 
$$Pr(\text{correct solution} | x_0 = m_0) =  Pr(\text{correct solution} | x = m_1) = \frac{1}{p^{m-2}},$$ 
where $Pr(\text{correct solution})$ denotes probability of finding the correct solution to the equation $H(x, x_1, \dots, x_{m-1}) - 1 =0$. Then by the law of total probability, the probability of solving the challenge equation is 
$$Pr(\text{correct solution}) = Pr(\text{correct solution} | x = m_0)Pr(x = m_0) + Pr(\text{correct solution} | x = m_1)Pr(x = m_1) = \frac{1}{p^{m-2}}.$$ 
Accounting for the advantage that the adversary has, the probability $\alpha$ is $Pr(\text{correct solution}) = \frac{\epsilon}{p^{m-2}}$. The total probability of solving the IND-CPA challenge is then $\frac{1}{2} + \frac{\epsilon}{p^{m-2}}.$ 
 
The simulation is indistinguishable from a real attack. So the adversary who can break the challenge ciphertext will uncover the solution to the given Modular Diophantine Equation problem. The probability of breaking the ciphertext is $\frac{\epsilon}{p^{m-2}}$. 
 
The advantage of solving the Diophantive Equation problem is then
 $\frac{\epsilon}{p^{m-2}}.$
 Let $T_s$ denote the time cost of the simulation.
 We have $T_s = \mathcal{O}(1)$.
 The simulator $\mathcal{B}$ solves the Modular Diophantine Equation with time cost and advantage $(t+T_s,\epsilon/{p^{m-2}}) = (t,\epsilon/{p^{m-2}}) $. Thus, contradicting the Theorem~\ref{theo:MDEPisNPcomplete} so the initial assumption is wrong.
 \end{proof}

The framework of the IND-CPA challenge entails known plaintext, in other words, the adversary knows that the secret $x \in \{m_0, m_1\}.$ We now state the complexity of the unknown plaintext ciphertext-only attack.

\begin{lemma}[Ciphertext-only attack]\label{complexplaintext}
Let $m+1>2$. The classical complexity of finding the plaintext from the ciphertext is $O(p^{m-1}).$
\end{lemma}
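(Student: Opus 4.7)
The plan is to argue that the ciphertext in its normalized reduced form $H(x,x_1,\dots,x_{m-1}) - 1 = 0$ over $\mathbb{F}_p$ is a single Modular Diophantine Equation in $m$ unknowns, and that the adversary who knows nothing about the noise vector must effectively enumerate its solution set. First I would recall, via Remark~\ref{Modp} and the definition of the reduced normalized form, that the adversary sees one equation in $m$ variables over $\mathbb{F}_p$ with $m-1$ of those variables acting as fresh, uniformly distributed noise that was chosen independently by the encryptor.

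Next I would count: the naive search space is $\mathbb{F}_p^m$, of size $p^m$. Imposing the single modular constraint $H - 1 \equiv 0 \pmod p$ reduces the expected number of satisfying assignments by a factor of $p$, leaving $p^{m-1}$ candidate tuples $(x,x_1,\dots,x_{m-1})$. Equivalently, the attacker may fix any choice of $m-1$ of the $m$ variables and solve the resulting univariate polynomial over $\mathbb{F}_p$ for the remaining one; doing this across all $p^{m-1}$ fixings (for example, enumerating the noise $(x_1,\dots,x_{m-1})$ and solving the degree-$n$ univariate for $x$) touches every candidate exactly once at constant cost per step, giving total work $O(p^{m-1})$.

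Finally I would argue that no faster classical strategy is available to the attacker within the model: by Theorem~\ref{theo:MDEPisNPcomplete} the Modular Diophantine Equation is NP-complete, so a solution algorithm is in general a brute-force search over the solution set, and the encryptor's noise choice is information-theoretically hidden from the adversary, so no candidate in the list of $p^{m-1}$ satisfying tuples can be excluded a priori as the true plaintext coordinate. Hence the ciphertext-only attack requires $\Theta(p^{m-1})$ operations in the worst case, which is the upper bound asserted.

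The main obstacle is ensuring the counting step is clean: one must justify that the expected $p^{m-1}$ satisfying tuples is also an \emph{achievable} upper bound on the work, rather than just an average-case estimate. I would handle this by describing the explicit enumeration procedure above (iterate over all noise assignments, each producing at most $n = O(1)$ univariate roots), which deterministically visits every solution in time $O(p^{m-1})$ regardless of the distribution of solutions, thereby turning the heuristic count into a worst-case bound consistent with the statement.
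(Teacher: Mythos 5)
Your proposal is correct and follows essentially the same route as the paper: reduce to the single normalized equation $H(x,x_1,\dots,x_{m-1})-1=0$ over $\mathbb{F}_p$, note it has about $p^{m-1}$ satisfying tuples with the correct one hidden among them, and bound the attack by brute-force enumeration, giving $O(p^{m-1})$. The only cosmetic differences are that you make the upper bound explicit via enumeration over the noise assignments (solving a constant-degree univariate for $x$ at each step), while the paper instead spends its space dismissing the ``guess $x$ alone'' shortcut by appealing to the required $32$-byte secret (complexity $O(p^K)$ with $K>m-1$); your NP-completeness-based optimality remark is no more rigorous than the paper's own informal treatment, but it is not needed for the stated $O(p^{m-1})$ bound.
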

\begin{proof}
Let the adversary favour the ciphertext in its reduced normal form~\eqref{reducednormalform}. Without any knowledge about the plaintext, the adversary will need to solve the Eq.~\eqref{reducednormalform} to obtain the plaintext along with the noise variables. A single equation over $\mathbb{F}_p$ with $m$ variables is expected to have $p^{m-1}$ possible solutions over $\mathbb{F}_p$. The correct one is among them. That is, the plaintext encapsulated in a single variable $x$ is not the sole variable in the ciphertext equation. However, it is the only unknown of interest. The adversary can try and simply guess $x$, the complexity of the guess is $\mathcal{O}(p)$. However, they have to test whether their guess is correct. Moreover, the secret transferred between the communicating parties consist of $32$ bytes as required by NIST. Thus, the adversary will have to guess $K$ many values for $x$, where $K = \frac{32\times 8}{p}$. In this case, the complexity is $\mathcal{O}(p^{K}).$ We expect $K > m-1.$
Quantum complexity of the described attack due to Grover's search algorithm is $O(p^{\frac{m-1}{2}}).$ 
\end{proof}

\subsection{Private key attack}

\begin{lemma}\label{plainattack}
Let $\lambda \le 2$. There exists a polynomial time algorithm to find coefficients of univariate polynomials $f_1(x_0)$ and $f_2(x_0)$ given the plain central maps $P_1$ and $P_2$.
\end{lemma}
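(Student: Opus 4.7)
The plan is to exploit the structural identity $p_1 f_2 = p_2 f_1$, which holds because both sides equal $b(x, x_1, \dots, x_m) f_1(x) f_2(x)$ by the construction in Eq.~\eqref{eq:pbf}. Since the coefficients $p_{ij1}, p_{ij2}$ of the plain maps are given, and the unknown coefficients $f_{1t}, f_{2t}$ for $t = 0, \dots, \lambda$ enter this identity linearly, the recovery problem reduces to solving a homogeneous linear system of modest size.

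Concretely, I would expand the polynomial identity
\begin{equation*}
\Big(\sum_{i=0}^{n}\sum_{j=1}^{m} p_{ij1} x^i x_j\Big)\Big(\sum_{t=0}^{\lambda} f_{2t} x^t\Big) = \Big(\sum_{i=0}^{n}\sum_{j=1}^{m} p_{ij2} x^i x_j\Big)\Big(\sum_{t=0}^{\lambda} f_{1t} x^t\Big)
\end{equation*}
and compare the coefficient of each monomial $x^k x_j$ on both sides. This produces, for each $k \in \{0, 1, \dots, n+\lambda\}$ and $j \in \{1, \dots, m\}$, the linear relation
\begin{equation*}
\sum_{t=0}^{\lambda} p_{(k-t)j1}\, f_{2t} \;-\; \sum_{t=0}^{\lambda} p_{(k-t)j2}\, f_{1t} \;=\; 0,
\end{equation*}
with the convention that $p_{ijk}=0$ whenever $i<0$ or $i>n$. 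This yields a homogeneous linear system $M\vec{v} = 0$ over $\mathbb{F}_p$, where $\vec{v} = (f_{10}, \dots, f_{1\lambda}, f_{20}, \dots, f_{2\lambda})^T$ has $2(\lambda+1) \le 6$ entries and $M$ has at most $(n+\lambda+1)m$ rows. Gaussian elimination on such a matrix runs in time polynomial in $n, m, \log p$, which establishes the complexity claim.

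For correctness, I would argue that any nonzero solution $(g_1, g_2)$ yields a pair that is useful for decryption. The true private pair $(f_1, f_2)$ is by construction a solution, so the solution space is at least one-dimensional. Conversely, any solution satisfies $p_1 g_2 = p_2 g_1$, hence $f_1 g_2 = f_2 g_1$ after cancelling the common factor $b$; assuming $f_1$ and $f_2$ are coprime (which holds generically and can be enforced at key generation), this forces $(g_1, g_2) = c(f_1, f_2)$ for some scalar $c \in \mathbb{F}_p^{\times}$. Since the decryption procedure described in Eq.~\eqref{eq:K} only uses the ratio $f_1(x)/f_2(x)$, the scalar $c$ is irrelevant, and the recovered pair suffices to break the scheme whenever $\lambda \le 2$ because the resulting univariate equation $g_1(x) = K g_2(x)$ of degree at most $2$ can be solved by radicals.

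The main obstacle is the genericity assumption $\gcd(f_1, f_2) = 1$: in the degenerate case where $f_1$ and $f_2$ share a common factor, the null space of $M$ may be higher-dimensional and a random solution might not reveal the correct ratio. I would address this by observing that the common factor can be detected, divided out, and the same algorithm reapplied; alternatively, one can note that all solutions $(g_1, g_2)$ still yield the identical rational function $g_1/g_2 = f_1/f_2$ as elements of $\mathbb{F}_p(x)$, so the ratio $K$ used in decryption is recovered unambiguously regardless of the dimension of the null space. The polynomial runtime of Gaussian elimination on the explicit small system is then immediate.
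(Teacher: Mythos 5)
Your proof is correct, but it takes a genuinely different route from the paper. The paper works per fixed noise index $j$ with the convolution relations $p_{ijz}=\sum_{s+t=i}b_{sj}f_{zt}$, uses Gaussian elimination to eliminate the unknown base coefficients $b_{sj}$, and is left with a small nonlinear (univariate or bivariate) equation in the ratios $f_{z1}/f_{z0}$, $f_{z2}/f_{z0}$, which it then solves by radicals, Evdokimov's algorithm, or resultants. You instead cross-multiply, exploiting $p_1f_2=p_2f_1$ (both equal $b\,f_1f_2$), so the unknown $f$-coefficients appear \emph{linearly} and the whole recovery is a single homogeneous linear system over $\mathbb{F}_p$ with at most $2(\lambda+1)\le 6$ unknowns, solved by plain Gaussian elimination; your null-space argument (cancel $b$ in the integral domain $\mathbb{F}_p[x,x_1,\dots,x_m]$, then use coprimality or, in the degenerate case, the observation that every nonzero solution gives the same rational function $g_1/g_2=f_1/f_2$) correctly shows the recovered pair determines the decryption ratio $K$ in Eq.~\eqref{eq:K}. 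What each buys: your method is more elementary and uniform (no appeal to Evdokimov-type root finding for the recovery step, and it works for any $\lambda$ as far as recovering the ratio is concerned, the bound $\lambda\le 2$ being needed only to solve $g_1(x)=Kg_2(x)$ by radicals), and it aggregates the information from all $j$ at once; the paper's per-$j$ elimination is closer to the structural form of the key equations and directly exhibits the coefficient ratios, but it ends with a nonlinear solving step. Note that, like the paper's own argument, you recover $f_1,f_2$ only up to a common scalar (or common factor), which suffices because decryption uses only the ratio $f_1/f_2$; so this is not a gap relative to what the lemma is used for.
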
 
\begin{proof}
Note that all the plain coefficients of the polynomials $p_1(x, x_1, \dots, x_m)$ and $p_2(x, x_1, \dots, x_m)$ as defined in Eq.~\eqref{eq:bff} are defined over the prime field $\mathbb{F}_p$. Thus, for any fixed $j$, it is possible to use Gaussian elimination to reduce the system of equations formed by the plain coefficients of $p_1(x, x_1, \dots, x_m)$ and $p_2(x, x_1, \dots, x_m)$ of the form
\begin{equation}
    \begin{cases}
        f_{z0}b_{0j} = p_{0jz}, \\
        f_{z1}b_{0j} + f_{z0}b_{1j} = p_{1jz},\\
        \vdots\\
        f_{z\lambda}b_{n_bj} = p_{njz},
    \end{cases}
\end{equation} 
where $z \in \{1,2\}$ corresponding to either plain polynomial $p_1(x, x_1, \dots, x_m)$ or $p_2(x, x_1, \dots, x_m)$ for any given noise variable $x_j$. Gaussian elimination would produce a single polynomial in $\lambda$ variables, namely $\frac{f_{z2}}{f_{z0}}$ and $\frac{f_{z1}}{f_{z0}}$ for $\lambda = 2$ or $\frac{f_{z1}}{f_{z0}}$ if $\lambda = 1$. Such univariate or bivariate equation is solvable. Depending on the HPPK parameters, the adversary can simply use radical solutions, Evdokimov's algorithm~\cite{evdokimov1994factorization}, or resultants together with Evdokimov's algorithm to solve such equation~\cite{evdokimov1994factorization,Stiller2004AnIT}. Gaussian elimination can be performed in polynomial time, and finding solutions by radicals, Evdokimov's algorithm and computing resultants all have polynomial time complexity~\cite{evdokimov1994factorization,Stiller2004AnIT}. 
\end{proof}

\begin{lemma}~\label{threevalues}
Let $\lambda \le 2$. Finding private key from the cipher public key in the framework of HPPK reduces to finding the homomorphic encryption key $S, R_1,$ and $R_2.$
\end{lemma}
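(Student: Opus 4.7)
The plan is to exhibit a polynomial-time algorithm that, taking $(S, R_1, R_2)$ together with the cipher public key as input, outputs the full private key. This suffices to establish the reduction, because the only private key components beyond $(S, R_1, R_2)$ are the univariate polynomials $f_1(x)$ and $f_2(x)$, and Lemma~\ref{plainattack} already tells us how to extract those in polynomial time from the plain central maps $P_1$ and $P_2$. Thus the whole task collapses to converting the cipher maps back into their plain counterparts.

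First I would compute $R_1^{-1} \bmod S$ and $R_2^{-1} \bmod S$ via the extended Euclidean algorithm; these inverses exist by the construction requirement $\gcd(R_k, S) = 1$ imposed on the encryption operators. Next, for every entry $\mathcal{P}_{ij1}$ of the cipher map I would apply the homomorphic decryption operator $\hat{\mathcal{E}}_{(R_1^{-1}, S)}$ coefficient-wise, producing $p_{ij1} = R_1^{-1} \mathcal{P}_{ij1} \bmod S$, and analogously recover each $p_{ij2}$ from $\mathcal{P}_{ij2}$ using $R_2^{-1}$. This step reconstructs the plain central maps $P_1$ and $P_2$ in their entirety.

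With $P_1$ and $P_2$ in hand, I would invoke Lemma~\ref{plainattack} (applicable because $\lambda \le 2$) to extract the coefficients of $f_1(x)$ and $f_2(x)$ in polynomial time via Gaussian elimination on the bilinear relations $p_{ijz} = \sum_{s+t=i} b_{sj} f_{zt}$, followed by radical or resultant-based root finding on the resulting univariate or bivariate equation. Concatenating this output with the assumed $(S, R_1, R_2)$ gives the complete private-key tuple.

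The only subtlety---and thus the closest thing to a hard part---is verifying that the coefficient-wise decryption in step two truly returns each plain coefficient as an element of $\mathbb{F}_p$, rather than some spurious representative modulo $S$. This is exactly what the bit-length condition $|S|_2 > 2|p|_2 + |L|_2$ baked into the scheme's parameter selection guarantees, so no additional argument is required. All three steps run in time polynomial in $|S|$, $|p|$, $m$, and $n$, yielding the claimed polynomial-time reduction from recovering the full private key to recovering the homomorphic encryption key $(S, R_1, R_2)$.
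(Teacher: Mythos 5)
Your proposal is correct and follows essentially the same route as the paper: the paper's proof simply observes that the private key beyond $(S, R_1, R_2)$ is just $f_1(x), f_2(x)$, and that once $S, R_1, R_2$ are known these follow in polynomial time from Lemma~\ref{plainattack}. You merely make explicit the intermediate step the paper leaves implicit---coefficient-wise application of $\hat{\mathcal{E}}_{(R_z^{-1}, S)}$ to recover the plain central maps before invoking that lemma---which is a harmless elaboration, not a different argument.
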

\begin{proof}
The private key consists of the coefficients of the univariate polynomials $f_1(x), f_2(x)$ as well as values $S, R_1, R_2$ used to encrypt the plain public key to the cipher public key.  By Lemma~\ref{plainattack} once the values $R_1, R_2$ and $S$ are known, the coefficients of $f_1(x), f_2(x)$ can be found in polynomial time. 
\end{proof}

\begin{definition}[Diophantine set]\label{Diophset}
The Diophantine set is a set $S \subset \mathbb{N}$ associated with a Diophantine equation $P(b, a_1, \dots, a_k) \in \mathbb{Z}[b, a_1, \dots, a_k],$ where $k>0$ such that $$b \in S \text{ if and only if } (\exists a_1, \dots, a_k)(P(b, a_1, \dots, a_m) = 0)$$
\end{definition}

\begin{theorem}[MRDP Theorem]\label{MRDP}
The Matiyasevich–Robinson–Davis–Putnam (MRDP) theorem states that every computably enumerable set is Diophantine, and every Diophantine set is computably enumerable.
\end{theorem}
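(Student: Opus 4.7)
The plan is to prove the two directions separately, since they differ enormously in depth. The easy direction, that every Diophantine set is computably enumerable, follows almost immediately from the definition: given a Diophantine equation $P(b, a_1, \dots, a_k) = 0$ defining $S$, I would describe a straightforward enumeration procedure that iterates through all tuples $(b, a_1, \dots, a_k) \in \mathbb{N}^{k+1}$ (say in order of their sum) and outputs $b$ whenever $P(b, a_1, \dots, a_k) = 0$ holds. Since polynomial evaluation over $\mathbb{Z}$ is computable, this is a valid enumeration algorithm and its output is exactly $S$.

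The hard direction, that every computably enumerable set is Diophantine, is the substance of the MRDP theorem and I would follow the classical Davis--Putnam--Robinson--Matiyasevich strategy in two stages. First, I would build up a library of Diophantine relations: closure under conjunction (by summing squares of the defining polynomials) and disjunction (by multiplying them), and closure under existential quantification (by simply adding more existentially bound variables). Using these, I would show that divisibility, congruence, inequalities, and the relation ``$c$ is the $k$th element in a finite sequence coded by $a$'' are all Diophantine. Next, following Davis--Putnam--Robinson, I would establish that every c.e.\ set is \emph{exponential Diophantine}: any c.e.\ set admits a defining equation in which the variables may occur in exponents of constants, obtained by arithmetizing the computation of the Turing machine that enumerates $S$ and using bounded universal quantifier elimination via a Gödel-style sequence-coding lemma.

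The main obstacle, and the step where Matiyasevich's 1970 breakthrough is essential, is removing the exponentiation and reducing exponential Diophantine definability to ordinary Diophantine definability. My plan here is to prove that the graph of exponentiation, $\{(a,b,c) : a = b^c\}$, is itself Diophantine. The key tool is the Pell equation $x^2 - (b^2 - 1)y^2 = 1$, whose solution sequence $(x_n, y_n)$ grows essentially like $b^n$ and satisfies strong second-order recurrences. I would verify that enough arithmetic properties of this sequence (in particular, a ``Julia Robinson''-style relation exhibiting exponential growth that is itself Diophantine) can be captured by polynomial equations, and then use these to pin down $b^c$ via a finite conjunction of Diophantine conditions. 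This is the technically most delicate part, requiring a careful chain of lemmas on divisibility properties of $y_n$ and $x_n$ (e.g., $y_m \mid y_n$ iff $m \mid n$, and $x_n \equiv x_{n \bmod 2k} \pmod{x_k}$ type congruences).

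Once exponentiation is Diophantine, I would close the argument by a routine substitution: every exponential Diophantine equation can be mechanically rewritten as an ordinary Diophantine equation by introducing fresh variables for each exponentiation and appending the Diophantine definition of the exponentiation graph. Combining this with the Davis--Putnam--Robinson reduction yields the result that every c.e.\ set is Diophantine, completing the theorem. I would note for the reader that a fully self-contained proof is book-length, and cite Matiyasevich's monograph and the treatment in Moore and Mertens already referenced in the paper for Theorem~\ref{theo:MDEPisNPcomplete}.
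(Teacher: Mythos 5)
Your proposal is correct as an outline, but it takes a very different route from the paper: the paper does not prove this theorem at all, it simply cites the literature (Matiyasevich's monograph) and treats MRDP as a known black box, which is the standard and appropriate move in a cryptography paper that only uses the result to argue hardness. You instead sketch the actual Davis--Putnam--Robinson--Matiyasevich argument, and your sketch is faithful to the classical proof: the easy direction (Diophantine $\Rightarrow$ c.e.) by brute-force enumeration of tuples is exactly right; the hard direction correctly decomposes into (i) closure properties of Diophantine relations (conjunction via sums of squares, disjunction via products, existential quantification for free), (ii) the Davis--Putnam--Robinson reduction of every c.e.\ set to an exponential Diophantine set via arithmetization and bounded universal quantifier elimination, and (iii) Matiyasevich's key lemma that the graph of exponentiation is Diophantine, proved through the Pell equation $x^2-(b^2-1)y^2=1$ and the divisibility/congruence properties of its solution sequences. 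What your approach buys is genuine insight into why the theorem is true and where the difficulty lies (step (iii)); what it cannot buy, as you rightly acknowledge, is a self-contained proof, since the details of the arithmetization and the Pell-sequence lemmas are book-length, so in the end you too must defer to the same references the paper cites. In a paper of this kind your outline would be a reasonable remark or appendix sketch, but the authors' citation-only treatment is the more proportionate choice given that MRDP is invoked only to support the claimed intractability of the private-key attack.
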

\begin{proof}
The result has been proven in various works, for instance~\cite{MRDP}. 
\end{proof}

\begin{theorem}[Hilbert's tenth problem]\label{Hilbert10}
Hilbert's tenth problem asks whether the general Diophantine Problem is solvable. Due to MRDP, Hilbert's tenth problem is undecidable.
\end{theorem}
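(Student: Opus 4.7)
The plan is to derive the undecidability of Hilbert's tenth problem as a direct corollary of the MRDP theorem stated in Theorem~\ref{MRDP}, by reducing from a computably enumerable set that is known to be non-computable (for instance, the halting set). The strategy has two ingredients: first, that MRDP provides a computable translation from any c.e.\ set into a Diophantine equation parameterized by its members; second, that any algorithm deciding solvability of arbitrary Diophantine equations would lift, via that translation, to a decision procedure for the chosen c.e.\ set, contradicting its non-computability.

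First I would fix a computably enumerable set $W \subseteq \mathbb{N}$ that is not computable; the standard choice is the halting set of a universal Turing machine. By Theorem~\ref{MRDP}, $W$ is Diophantine in the sense of Definition~\ref{Diophset}, so there exists a polynomial $P(b, a_1, \ldots, a_k) \in \mathbb{Z}[b, a_1, \ldots, a_k]$ such that for every $b \in \mathbb{N}$,
\begin{equation*}
b \in W \iff (\exists a_1, \ldots, a_k \in \mathbb{Z})\bigl(P(b, a_1, \ldots, a_k) = 0\bigr).
\end{equation*}
Next, suppose for contradiction that there is an algorithm $\mathcal{A}$ that, on input an arbitrary Diophantine equation, decides whether it has an integer solution. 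Given any $b \in \mathbb{N}$, I would form the instance $P(b, a_1, \ldots, a_k) = 0$, which is a Diophantine equation in the variables $a_1, \ldots, a_k$, and feed it to $\mathcal{A}$. The output of $\mathcal{A}$ decides precisely whether $b \in W$, and since the construction of the instance from $b$ is a computable substitution, this yields a total decision procedure for $W$.

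This contradicts the non-computability of $W$, and hence no such $\mathcal{A}$ can exist; Hilbert's tenth problem is undecidable. The hard part of the argument is entirely absorbed into MRDP itself, whose proof requires the Matiyasevich construction showing that exponentiation (and more generally every c.e.\ predicate) is Diophantine. Since Theorem~\ref{MRDP} is cited and proved elsewhere, the work in this theorem amounts to spelling out the standard many-one reduction from a fixed non-computable c.e.\ set to the general Diophantine decision problem. I would close by remarking that the same argument goes through with natural-number unknowns in place of integer unknowns, by Lagrange's four-square theorem, so the specific choice of solution domain is inessential.
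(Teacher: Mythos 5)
Your proposal is correct and follows essentially the same route as the paper, which simply cites Matiyasevich's book for this standard fact; the theorem statement itself already attributes the undecidability to MRDP, and your argument just spells out the usual reduction (a non-computable c.e.\ set such as the halting set is Diophantine by Theorem~\ref{MRDP}, so a decision procedure for Diophantine solvability would decide it). The added remark about passing between integer and natural-number unknowns via Lagrange's four-square theorem is a correct and harmless supplement.
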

\begin{proof}
For proof see~\cite{MRDP}.
\end{proof}

\begin{theorem}\label{complexprivkey}
Private key attack is non-deterministic and has complexity of at least $O(T^{3})$, where $T$ is the largest number with $2|p|_2 + |L|_2$ bit-length.  
\end{theorem}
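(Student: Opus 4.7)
The plan is to first invoke Lemma~\ref{threevalues} to reduce the private key recovery problem to the task of recovering the triple $(S,R_1,R_2)$ from the cipher public key. Once this reduction is in place, I would argue that the adversary is confronted with a system of Diophantine equations of the form
\begin{equation*}
\mathcal{P}_{ij1} \equiv R_1 p_{ij1} \pmod{S}, \qquad \mathcal{P}_{ij2} \equiv R_2 p_{ij2} \pmod{S},
\end{equation*}
in the unknowns $S$, $R_1$, $R_2$, and the (also unknown) plain coefficients $p_{ij1},p_{ij2}$, where by construction $|S|_2 > 2|p|_2 + |L|_2$ and $R_1,R_2 \in \mathbb{Z}_S$.

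Second, I would cast this system as an instance of the general Diophantine problem, appealing to Definition~\ref{Diophset} and Theorem~\ref{MRDP} to identify the set of admissible triples with a Diophantine set. By Theorem~\ref{Hilbert10} there is no deterministic decision procedure for the general Diophantine problem, which is the origin of the non-deterministic qualifier in the statement. In the bounded setting here, a solution is known to exist, so the relevant consequence is not undecidability but the absence of any known algorithm that avoids exhaustive search through the space of candidate triples; each candidate must be guessed and then verified against the public key equations.

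Third, I would bound the size of the search space. Because every candidate $(S,R_1,R_2)$ is constrained by $R_1,R_2 < S \le T$, where $T$ denotes the largest integer of bit-length $2|p|_2 + |L|_2$, the number of triples to consider is at most of order $T^3$. Verifying a guess reduces to checking the modular relations above, so the dominant cost of a brute-force attack is the enumeration itself, yielding complexity at least $\mathcal{O}(T^3)$. Combining this with the non-deterministic nature inherited from the Diophantine formulation gives the theorem.

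The main obstacle I anticipate is the second step: justifying the reduction to a genuine Diophantine instance rigorously enough that Theorem~\ref{Hilbert10} applies, while also making clear that bounded search still works in principle but gains nothing better than the $T^3$ enumeration bound. Care is needed to separate the asymptotic undecidability argument from the concrete finite-search cost, and to ensure that the verification step for a guessed triple does not itself hide additional exponential factors that would strengthen (rather than weaken) the stated bound.
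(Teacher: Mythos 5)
Your proposal takes essentially the same route as the paper: reduce the private-key attack to recovering the triple $(S,R_1,R_2)$ via Lemma~\ref{threevalues} (backed by Lemma~\ref{plainattack} for the subsequent polynomial-time recovery of $f_1,f_2$), bound the brute-force search by the roughly $T^3$ candidate triples of bit-length $2|p|_2+|L|_2$, and invoke the Diophantine set / MRDP / Hilbert's tenth problem machinery (Definition~\ref{Diophset}, Theorems~\ref{MRDP} and~\ref{Hilbert10}) to justify the non-deterministic character. The only cosmetic difference is that the paper rewrites the congruences in the integer domain with unknown multiples of $S$ and Gaussian-eliminates to a single multivariate Diophantine equation per noise index $j$, whereas you cast the modular relations directly as a Diophantine instance — the substance is the same.
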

\begin{proof}
By Lemma~\ref{plainattack} and~\ref{threevalues} the attack on public key reduces to finding the values $S, R_1$, and $R_2.$ From the perspective of the attacker, the values $S, R_1$, and $R_2$ could be treated as a one-time pad keys as they have been chosen at random, and can not be calculated from other parameters given to the attacker. An obvious attack would be a brute force search for all the three values, $S, R_1$, and $R_2.$ The direct brute force search classical complexity would be greater than $O(T^{3})$ for the three values together, where $T$ is the largest $(2|p|_2 + |L|_2)$-bit number. Due to Grover's algorithm, the quantum complexity is greater than $O(T^{\frac{3}{2}}).$ Note however, because of the condition $gcd(S, R_1) = gcd(S, R_2) = 1$, once $S$ is found the search span for $R_1$ and $R_2$ reduces. Brute force search entails a non-deterministic result, however, we provide a more formal argument below. 

For each fixed chose of $j$, each public key coefficient can be written in the integer domain as follows
\begin{equation} \label{eq:bb}
    \begin{cases}
        f_{z0}b'_{z0j} = r_{0jz}S + \mathcal{P}_{0jz}, \\
        f_{z1}b'_{z0j} + f_{z0}b'_{z1j} =r_{1jz}S + \mathcal{P}_{1jz},\\
        \vdots\\
        f_{z\lambda}b'_{zn_bj} = r_{njz}S + \mathcal{P}_{njz},
    \end{cases}
\end{equation} 
with $j = 1, \dots, m$, $z=1, 2$, and $b'_{zij} = R_{z}b_{ij}$. Here, $R_{z}$, and $S$ are unknowns from the hidden ring $\mathbb{Z}_S$. Values $r_{ijz}$ are merely some unknown integers. The only known values are those of the form $\mathcal{P}_{ijz}$. Using Gaussian eliminations, all unknowns of the form $b'_{zij}$  can be eliminated, and the equation system in Eq~\eqref{eq:bb} can be reduced to a single equation over $\mathbb{Z}$
\begin{equation}
    P(f_{z0},\dots,f_{z\lambda}, r_{0jz}, \dots, r_{njz}, S) - \bar{P} = 0.
\end{equation}
Solving such equation by Theorem~\ref{MRDP} and~\ref{Hilbert10} is an NP-complete task. For each $j$, we can generate one such equation. Considering them all together, the adversary will arrive at an underdetermined system as the variables in the system depend on $j$. Each equation in such a system is a multivariate Diophantine equation. One way to solve this system is to solve each equation separately and search for common solutions. However, by Theorem~\ref{MRDP} and~\ref{Hilbert10} this is an NP-complete problem. Reducing the system to a single polynomial still produces a multivariate Diophantine equation, solving which is an NP-complete problem by Theorem~\ref{MRDP} and~\ref{Hilbert10}.  
\end{proof}

\subsection{Security Conclusion}

At large, the security of the HPPK cryptosystem relies on the problem of solving undetermined system of equations over $\mathbb{F}_p$. Such system is expected to have $p^{n-m}$ possible solutions, where $n$ is the number of variables and $m$ is the number of equations in the system. The attacker can solve this system to find all possible solutions, however, it is the problem of determining the correct solution from all the possible solutions that makes HPPK secure. 

The ciphertext attack requires the adversary to solve an underdetermined system of equations over $\mathbb{F}_p$, which can be reduced to a single Modular Diophantine equation. Solving this equation is an NP-complete problem. 

The public key attack aimed to unveil the plaintext reduces to a brute force search for three unknown values $S, R_1, R_2.$ To find these values, the attacker can either use brute-force search or solve an underdetermined system of equations over the integers. The former yields non-deterministic results and the latter is an NP-complete problem.   

We conclude that from the point of view of the adversary, the following is true. 
\begin{proposition}
The best classical complexity to attack HPPK is $O(p^{m-1})$.
\end{proposition}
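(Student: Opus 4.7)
The plan is to establish this proposition by combining the complexity bounds already derived in this section for the two attack models. Since the adversary is free to pick whichever strategy minimises their work, showing that the best classical attack cost equals $O(p^{m-1})$ amounts to (i) exhibiting an attack of that cost and (ii) arguing that every other reasonable attack costs at least as much.

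First, I would invoke Lemma~\ref{complexplaintext} directly to obtain the upper bound: reducing the ciphertext to its normalized reduced form and brute-forcing the $m-1$ noise unknowns over $\mathbb{F}_p$ yields a ciphertext-only attack of classical complexity $O(p^{m-1})$. Next, I would invoke Theorem~\ref{complexprivkey} to bound the private-key attack: recovering the triple $(S,R_1,R_2)$ has brute-force classical complexity at least $O(T^3)$, where $T$ is a $(2|p|_2+|L|_2)$-bit integer, so $T \gtrsim p^2 L$ and $T^3 \gtrsim p^6 L^3$. Moreover, by Theorem~\ref{MRDP} and Theorem~\ref{Hilbert10}, the underlying multivariate Diophantine system is NP-complete and no polynomial-time shortcut is known, so this brute-force lower bound cannot be circumvented. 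For the recommended parameter regimes in which $m-1$ is well below $6$, the private-key attack is therefore strictly more expensive than the ciphertext-only attack, and the pointwise minimum of the two bounds equals $O(p^{m-1})$.

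The main obstacle I expect is not the arithmetic comparison of $p^{m-1}$ with $T^3$ (which is immediate from the bit-length condition $|S|_2 > 2|p|_2 + |L|_2$), but the need to rule out hybrid or structural attacks outside the two taxonomised vectors — for example, partial recovery of $R_1/R_2$ through ratios of public-key coefficients, lattice attacks exploiting the modular-multiplication structure of $\hat{\mathcal{E}}_{(R,S)}$, or Gr\"obner-basis strategies on the normalized system~\eqref{ciphermodpnorm} before reduction. My plan for this step is to argue informally that any such hybrid either decomposes into one of the two analysed subproblems (inheriting its complexity), or introduces further unknowns from the hidden ring that strictly enlarge the search space — leveraging the fact, established in the proof of Theorem~\ref{complexprivkey}, that eliminating the $b'_{zij}$ unknowns still leaves a multivariate Diophantine equation whose solution is NP-complete.

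Having combined these ingredients, the proposition follows by taking the minimum of the two classical complexities and observing that the ciphertext-only bound $O(p^{m-1})$ is the attained one.
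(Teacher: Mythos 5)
Your proposal takes essentially the same approach as the paper: its proof likewise simply combines Lemma~\ref{complexplaintext} (ciphertext-only attack at $O(p^{m-1})$) with Lemma~\ref{plainattack}, Lemma~\ref{threevalues}, and Theorem~\ref{complexprivkey} (private-key attack at cost at least $O(T^{3})$) and concludes that the plaintext-from-ciphertext attack is the adversary's cheapest option. Your explicit arithmetic comparison of $p^{m-1}$ against $T^{3}\gtrsim p^{6}$ and your remarks on hybrid attacks merely spell out what the paper leaves implicit, so the argument is the same in substance.
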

\begin{proof}
We assume that the malicious party will take the most advantageous path for them. Thus, by Lemma~\ref{complexplaintext}, Lemma~\ref{plainattack}, Lemma~\ref{threevalues}, and Theorem~\ref{complexprivkey} we can conclude that the best attack is to obtain the plaintext from the ciphertext. Such attack is non-deterministic with classical complexity of $O(p^{m-1}).$
\end{proof}





\section{Brief Benchmarking Performance}\label{sec:bench}
To account for the best complexity of $O(p^{m-1})$, we recommend the following configuration to achieve NIST security levels I, III, and V, as illustrated in Table~\ref{tab:Config}. 

\begin{table}[htbp]
\caption{Configuration of HPPK for different NIST Security Levels.}
\begin{center}
\begin{tabular}{|c|c|c|c|}
\hline
&\multicolumn{3}{|c|}{\textbf{Security Level}} \\
\cline{2-4} 
 \textbf{Configuration} & \textbf{\textit{Level I}}& \textbf{\textit{Level III}}& \textbf{\textit{Level V}}\\
\hline
($\log p, n_b, \lambda, m$)& (64, 1, 1, 3) & (64, 1, 1, 4), & (64, 1, 1, 5)\\
($\log p, n_b, \lambda, m$)& (64, 2, 1, 3) & (64, 2, 1, 4), & (64, 2, 1, 5)\\
\hline
\end{tabular}
\label{tab:Config}
\end{center}
\end{table}

To measure the performance of the HPPK algorithm we used benchmarking toolkit, called the SUPERCOP. NIST PQC finalists used the SUPERCOP for their benchmarking and have contributed the results to the platform, thus, we take advantage of the available resources, and report on the performance of HPPK alongside with the NIST PQC schemes, namely, McEliece, Kyber, NTRU, and Saber algorithms. From now on we refer to them as the NIST finalists. For our work we used a 16-core Intel\textregistered Core\texttrademark i7-10700 CPU at 2.90 GHz system. We have not, however, configured the AVX solution for optimized HPPK performance. Therefore, comparisons in this benchmarking performance are set to the reference mode for all finalists but in the same computing system.

We start by illustrating the parameter set of all the measured primitives for all three security levels in the Table~\ref{tab:Parameters}. As required by NIST, the secret is set at $32$ bytes. The data illustrates that for each security level, HPPK offers considerably small public key sizes and ciphertext sizes for all three security levels compared to all NIST finalists, except for the ciphertext size of McEliece at level I and level III. We point out that, HPPK offers the same secret key size of 83 bytes and ciphertext size of 208 bytes for all three levels. In comparison with the NIST standardized KEM algorithm Kyber, HPPK's public key sizes are less than half of respective public key sizes for Kyber for all three security levels. Secret key sizes for Kyber are about 20 times bigger at level I and close to 40 times bigger at level V than those of HPPK. As for the ciphertext sizes, Kyber demonstrates 3.7-7.5 times bigger ciphertext sizes than those of HPPK.

\begin{table}[htbp]
\caption{Parameter set of the measured primitives for NIST security levels I, III, and V, given the secret size of $32$ bytes}
\begin{center}
\begin{tabular}{|c|c|c|c|c|c|c|c|c|c|}
\hline
\textbf{Crypto}&\multicolumn{9}{|c|}{\textbf{Size (Bytes)}} \\
\cline{2-10} 
\textbf{system}& \multicolumn{3}{|c|}{\textbf{Level I}} & \multicolumn{3}{|c|}{\textbf{Level III}}&\multicolumn{3}{|c|}{\textbf{Level V}}\\
\cline{2-10}
\textbf{} & \textbf{\textit{PK}}$^{\mathrm{1}}$& \textbf{\textit{SK}}$^{\mathrm{1}}$& \textbf{\textit{CT}}$^{\mathrm{1}}$& \textbf{\textit{PK}}& \textbf{\textit{SK}}& \textbf{\textit{CT}}&\textbf{\textit{PK}}& \textbf{\textit{SK}}& \textbf{\textit{CT}}\\
\hline
McEliece$^{\mathrm{2}}$~\cite{McEliece1978}&  261120 & 6492 & 128 & 524,160 & 13,608 & 188 & 1,044,992 & 13,932 & 240\\
NTRU$^{\mathrm{4}}$~\cite{Hoffstein1998}& 699 & 935 & 699 & 930 & 1,234 & 930 & 1,230 & 1,590 & 1,230\\
Saber$^{\mathrm{5}}$~\cite{vercauterensaber} & 672	& 1568	& 736 & 1,312	& 3,040	& 1,472 & 1,312	& 3,040	& 1,472\\
Kyber$^{\mathrm{3}}$~\cite{avanzi2017crystals} &800	& 1632 & 768 & 1,184	& 2,400 & 1,088 & 1,568	& 3,168 & 1,568\\
HPPK($n_b=1$)$^{\mathrm{6}}$ & 306 & 83 & 208 & 408 & 83 & 208 & 510 & 83 & 208\\
HPPK($n_b=2$)$^{\mathrm{6}}$ & 408 & 83 & 208 & 544 & 83 & 208 & 680 & 83 & 208\\
\hline
\end{tabular}
\end{center}

\footnotesize{$^{\mathrm{1}}$ We denote the secret key as \textbf{\textit{SK}}, the public key as \textbf{\textit{PK}}, and the ciphertext as \textbf{\textit{CT}}}.\\
\footnotesize{$^{\mathrm{2}}$ \textit{mceliece348864} primitive was measured for Level I, \textit{mceliece460896} primitive was measured for Level III, and \textit{mceliece6688128} for Level V}\\
\footnotesize{$^{\mathrm{3}}$ \textit{Kyber512} primitive was measured for Level I, \textit{Kyber768} primitive was measured for Level III, and \textit{Kyber1024} for Level V}\\
\footnotesize{$^{\mathrm{4}}$ \textit{NTRUhps2048509} primitive was measured for Level I, \textit{ntruhps2048677} primitive was measured for Level III, and \textit{ntruhps4096821} for Level V}\\
\footnotesize{$^{\mathrm{5}}$ \textit{Light Saber} primitive was measured for Level I, \textit{Saber} primitive was measured for Level III, and \textit{FireSaber} for Level V}\\
\footnotesize{$^{\mathrm{6}}$ For each security level, HPPK primitive is configured as shown in Table~\ref{tab:Config}}

\label{tab:Parameters}
\end{table}

Table~\ref{tab:Keygen} provides the reader with median values in clock cycles of the measurement results for the key generation procedure of all the primitives configured to provide security levels I, III, and V. To provide a bigger picture we include results for RSA-2048. The results correspond only to security level I, as RSA-2048 provides $112$ bits of entropy. The reader can see that HPPK key generation performance is rather fast, with median clock cycles of over $18,000$ for $n_b=1$ and $22,000$ for $n_b=2$  for level I, over $21,000$ for $n_b=1$ and $28,000$ for $n_b=2$ for level III, and over $26,000$ for $n_b=1$ and $34,000$ for $n_b=2$ for level V. The fastest key generation performance among the NIST finalists is offered by Saber and Kyber, with median values of over $39,000$ and $72,000$ clock cycles respectively for level I, median values over $115000$ for level III, and median values of $128,000$ clock cycles for level V. Compared to the standard algorithm Kyber, HPPK demonstrates a 3.5-6 times faster key generation performance. The remaining primitives measured, including RSA, display median values of over $6$ million clock cycles for level I, over $10$ million for level III, and over $16$ million clock cycles for level V.

\begin{table}[htbp]
\caption{Median values of the key generation performance for NIST security levels I, III, and V.}
\begin{center}
\begin{tabular}{|c|c|c|c|}
\hline
\textbf{Crypto}&\multicolumn{3}{|c|}{\textbf{Performance (Clock cycles)}} \\
\cline{2-4} 
\textbf{system} & \textbf{Level I}&  \textbf{Level III} &  \textbf{Level V}\\
\hline
McEliece$^{\mathrm{1}}$~\cite{McEliece1978}  & 152,424,455&	509,364,485	&	1,127,581,201\\
NTRU$^{\mathrm{3}}$~\cite{Hoffstein1998} &	6,554,031&	10,860,295&		16,046,953	\\
Saber$^{\mathrm{4}}$~\cite{vercauterensaber} & 	39,654&	128,935	&128,412\\
Kyber$^{\mathrm{2}}$~\cite{avanzi2017crystals} & 	72,403&		115,654	&	177,818	\\
HPPK ($n_b=1$)$^{\mathrm{5}}$ & 	18,034&	21,946&	26,603\\	
HPPK($n_b=2$)$^{\mathrm{5}}$ & 	22,625&	28,360&	34,719\\	
RSA-2048 & 91,985,129 &- &-  \\
\hline
\end{tabular}
\end{center}
\footnotesize{$^{\mathrm{1}}$ \textit{mceliece348864} primitive was measured for Level I, \textit{mceliece460896} primitive was measured for Level III, and \textit{mceliece6688128} for Level V} \\
\footnotesize{$^{\mathrm{2}}$ \textit{Kyber512} primitive was measured for Level I, \textit{Kyber768} primitive was measured for Level III, and \textit{Kyber1024} for Level V}\\
\footnotesize{$^{\mathrm{3}}$ \textit{NTRUhps2048509} primitive was measured for Level I, \textit{ntruhps2048677} primitive was measured for Level III, and \textit{ntruhps4096821} for Level V}\\
\footnotesize{$^{\mathrm{4}}$ \textit{Light Saber} primitive was measured for Level I, \textit{Saber} primitive was measured for Level III, and \textit{FireSaber} for Level V}\\
\footnotesize{$^{\mathrm{5}}$ For each security level, HPPK primitive is configured as shown in Table~\ref{tab:Config}}
\label{tab:Keygen}
\end{table}

We provide Table~\ref{tab:Enc} to illustrate encryption procedure performance of HPPK, NIST finalists, and RSA-2048. The table illustrates median values given in clock cycles. HPPK offers fast encryption with clock cycles from 17,000 for security level I to 25,000 for security level V, outperforming all mentioned NIST finalists. More specifically comparing the NIST standard algorithm, Kyber, to HPPK the table illustartes that Kyber offers 4-8 times slower encryption than HPPK for all levels. However, RSA-2048 offers the fastest encryption performance among all the other primitives measured for level I, due to its small public encryption key, usually chosen to be 65535. 

In Table~\ref{tab:Dec} we illustrate median values given in clock cycles for the decryption procedure corresponding to the HPPK algortihm, the NIST finalists algorithms, and RSA-2048. Table~\ref{tab:Dec} shows that HPPK offers fast decryption performance with median values for level I being at $28,000$ clock cycles. Meanwhile, median values for the faster NIST finalists are over $63,000$ and $117,000$ clcok cycles for Saber and Kyber respectively configured to provide level I security. Both RSA-2048 and NTRU have median values over $1$ million clock cycles for level I. McEliece median value is over $45$ million clock cycles. For levels III and V the account is similar. HPPK offers median values for both of these levels which fall in the interval of $[28000, 30000]$ clock cycles, while median values for Saber are over $170,000$ for levels III and V. Median values for Kyber fall into the interval of $[166000,238000]$ for levels III and V. NTRU displays median values of over $2$ million clock cycles for level III and V. McEliece offers the slowest decryption procedure with median values being over $93$ million for level III, and $179$ million for level V. HPPK decryption demonstrates a stable performance at $30,000$ clock cycles for all security levels, due to its special decryption mechanism with a modular division.

\begin{table}
\caption{Median values of the key encapsulation performance estimated in clock cycles for NIST security levels I, III, and V.}
\begin{center}
\begin{tabular}{|c|c|c|c|c|}
\hline
\textbf{Crypto}&\multicolumn{3}{|c|}{\textbf{Performance (Clock cycles)}} \\
\cline{2-4} 
\textbf{system} & \textbf{Level I}& \textbf{Level III}& \textbf{Level V} \\
\hline
McEliece$^{\mathrm{1}}$~\cite{McEliece1978} & 108,741& 172,538& 263,169\\
NTRU$^{\mathrm{3}}$~\cite{Hoffstein1998} &	418,622	&703,046&1,063,124\\
Saber$^{\mathrm{4}}$~\cite{vercauterensaber} &	62,154	&157,704	& 157,521\\
Kyber$^{\mathrm{2}}$~\cite{avanzi2017crystals} &	95,466&	140,376& 205,505\\
HPPK($n_b=1$)$^{\mathrm{5}}$ & 17,354 & 20,951&25,087\\
HPPK($n_b=2$)$^{\mathrm{5}}$ & 23,073 & 28,642&35,173\\
RSA-2048&13,429& - & -\\
\hline
\end{tabular}
\end{center}
\footnotesize{$^{\mathrm{1}}$ \textit{mceliece348864} primitive was measured for Level I, \textit{mceliece460896} primitive was measured for Level III, and \textit{mceliece6688128} for Level V}  \\
\footnotesize{$^{\mathrm{2}}$ \textit{Kyber512} primitive was measured for Level I, \textit{Kyber768} primitive was measured for Level III, and \textit{Kyber1024} for Level V} \\
\footnotesize{$^{\mathrm{3}}$ \textit{NTRUhps2048509} primitive was measured for Level I, \textit{ntruhps2048677} primitive was measured for Level III, and \textit{ntruhps4096821} for Level V} \\
\footnotesize{$^{\mathrm{4}}$ \textit{Light Saber} primitive was measured for Level I, \textit{Saber} primitive was measured for Level III, and \textit{FireSaber} for Level V} \\
\footnotesize{$^{\mathrm{5}}$ For each security level, HPPK primitive is configured as shown in Table~\ref{tab:Config}}
\label{tab:Enc}
\end{table}

\begin{table}
\caption{Median values of key decapsulation performance for NIST security levels I, III, and V.}
\begin{center}
\begin{tabular}{|c|c|c|c|c|}
\hline
\textbf{Crypto}&\multicolumn{3}{|c|}{\textbf{Performance (Clock cycles)}} \\
\cline{2-4} 
\textbf{system} & \textbf{Level I}& \textbf{Level III}& \textbf{Level V} \\
\hline
McEliece$^{\mathrm{1}}$~\cite{McEliece1978} & 45,119,775&	93,121,708&	179,917,369\\
NTRU$^{\mathrm{3}}$~\cite{Hoffstein1998} & 1,245,062&	2,099,254&	3,129,150\\
Saber$^{\mathrm{4}}$~\cite{vercauterensaber} &63,048&	173,712&	177,109\\
Kyber$^{\mathrm{2}}$~\cite{avanzi2017crystals} & 117,245	&166,062&	237,484\\
HPPK($n_b=1$)$^{\mathrm{5}}$ & 28,301 &28,759	& 	29,671\\
HPPK ($n_b=2$)$^{\mathrm{5}}$ & 29,791 &29,266	& 	29,364\\
RSA-2048& 1,670,173 & - & - \\
\hline
\end{tabular}
\end{center}
\footnotesize{$^{\mathrm{1}}$ \textit{mceliece348864} primitive was measured for Level I, \textit{mceliece460896} primitive was measured for Level III, and \textit{mceliece6688128} for Level V}  \\
\footnotesize{$^{\mathrm{2}}$ \textit{Kyber512} primitive was measured for Level I, \textit{Kyber768} primitive was measured for Level III, and \textit{Kyber1024} for Level V}  \\
\footnotesize{$^{\mathrm{3}}$ \textit{NTRUhps2048509} primitive was measured for Level I, \textit{ntruhps2048677} primitive was measured for Level III, and \textit{ntruhps4096821} for Level V}  \\
\footnotesize{$^{\mathrm{4}}$ \textit{Light Saber} primitive was measured for Level I, \textit{Saber} primitive was measured for Level III, and \textit{FireSaber} for Level V}  \\
\footnotesize{$^{\mathrm{5}}$ For each security level, HPPK primitive is configured as shown in Table~\ref{tab:Config}}
\label{tab:Dec}
\end{table}

\section{Conclusion}\label{conclusion}
In this paper, we introduced a new Functional Homomorphic Encryption, which in contrast with conventional homomorphic encryption, is intended to secure public keys of multivariate asymmetric cryptosystems. 
Functional homomorphic encryption is applied to polynomials, to leverage homomorphic properties and allow for user input through variables. The functional homomorphic encryption and decryption operators are multiplication operators modulo a hidden value $S$, with values $R_1$ and $R_2$ respectively. Such values $R_1$ and $R_2$ are chosen uniformly at random from the hidden ring $\mathbb{Z}_{S}$ with certain conditions. We propose to use said homomorphic encryption in conjunction with Multivariate Polynomial Public-key Cryptography, to secure the polynomial public keys, however, we do not study the encrypted MPKC in detail. Instead, we suggest a new variant of multivariate public-key cryptosystem with public keys encrypted using homomorphic encryption, called Homomorphic Polynomial Public Key or HPPK. We described the HPPK algorithm in detail, with the framework drawn from MPKC. HPPK public keys are product polynomials of a multivariate and univariate polynomials, encrypted with a homomorphic encryption operator. The ciphertext is created by the encrypting party through the input of plaintext and random noise as public polynomial variables. The decryption procedure involves first decrypting the public key, to nullify the homomorphic encryption and produce the original ciphertext. Said ciphertext is used to divide two product polynomials. By construction, such division cancels the base multiplicand polynomial with noise variable. and retains a single equation in one variable. Said variable is the plaintext, which can be found by radicals. We give a thorough security analysis of the HPPK cryptosystem, proving that the hardness of breaking the HPPK algorithm comes from the computational hardness of the Modular Diophantine Equation, and Hilbert's tenth problem. We also show that HPPK holds IND-CPA property. We report briefly on benchmarking the performance of the HPPK cryptosystem, using the NIST-recognized SUPERCOP benchmarking tool, with $n_b=1$ and $\lambda=1$. The benchmarking data illustrates that the HPPK offers rather small public keys and comparable ciphertext sizes. The key generation, key encapsulation, and key decapsulation procedure performance are efficient, being noticeably faster when considered together with the NIST PQC finalists. If the degree of univariate polynomial $f_1(x)$ and $f_2(x)$ are higher than 1, such as quadratic polynomials, the decryption would produce multiple roots, then an extra verification procedure is required. Moreover, the decryption speed would be dramatically slower than linear polynomials $f_1(x)$ and $f_2(x)$. In the future work, we will perform more detail benchmarking with variety of configurations as well as a more extensive security analysis, considering attacks that have not been described in the work.

\bibliographystyle{unsrt}

\end{document}